\renewcommand\footnotetextcopyrightpermission[1]{} 
\title{Fully Read/Write Fence-Free Work-Stealing with Multiplicity}
\author{Armando Casta\~neda}
\affiliation{
  \institution{Instituto de Matem\'aticas, UNAM}
  \country{M\'exico}
}
\email{armando.castaneda@im.unam.mx}
\author{Miguel Pi\~na}
\affiliation{
  \institution{Facultad de Ciencias, UNAM}
  \country{M\'exico}
}
\email{miguel\_pinia@ciencias.unam.mx}
\newcommand{\Put}{{\sf Put}\xspace}
\newcommand{\Take}{{\sf Take}\xspace}
\newcommand{\Puts}{{\sf Puts}\xspace}
\newcommand{\Takes}{{\sf Takes}\xspace}
\newcommand{\Steal} {{\sf Steal}\xspace}
\newcommand{\RAW}{{\sf Read-After-Write}\xspace}
\newcommand{\RMW}{{\sf Read-Modify-Write}\xspace}
\newcommand{\CAS}{{\sf Compare\&Swap}\xspace}
\newcommand{\TAS}{{\sf Test\&Set}\xspace}
\newcommand{\R}{{\sf Read}\xspace}
\newcommand{\W}{{\sf Write}\xspace}
\newcommand{\MaxReg}{{\sf MaxRegister}\xspace}
\newcommand{\RangeMaxReg}{{\sf RangeMaxRegister}\xspace}
\newcommand{\SWAP}{{\sf Swap}\xspace}
\newcommand{\Enq}{{\sf Enqueue}\xspace}
\newcommand{\Push}{{\sf Push}\xspace}
\newcommand{\Pop}{{\sf Pop}\xspace}
\newcommand{\Deq}{{\sf Dequeue}\xspace}
\newcommand{\op}{{\sf op}\xspace}
\newcommand{\true}{{\sf true}\xspace}
\newcommand{\false}{{\sf false}\xspace}
\newcommand{\epty}{{\sf empty}\xspace}
\newcommand{\SetLin}{{\sf SetLin}\xspace}
\newcommand{\WFWSM}{{\sf WS-MULT}\xspace}
\newcommand{\MaxR}{{\sf MaxRead}\xspace}
\newcommand{\MaxW}{{\sf MaxWrite}\xspace}
\newcommand{\RMaxR}{{\sf RMaxRead}\xspace}
\newcommand{\RMaxW}{{\sf RMaxWrite}\xspace}
\newcommand{\NCWSM}{{\sf WS-WMULT}\xspace}
\newcommand{\BNBWSM}{{\sf B-WS-MULT}\xspace}
\newcommand{\BNCWSM}{{\sf B-WS-WMULT}\xspace}
\newcounter{linecounter}
\newcommand{\linenumbering}{\ifthenelse{\value{linecounter}<10}{(0\arabic{linecounter})}{(\arabic{linecounter})}}
\renewcommand{\line}[1]{\refstepcounter{linecounter}\label{#1}\linenumbering}
\newcommand{\resetline}[1]{\setcounter{linecounter}{0}#1}
\newtheorem{remark}{Remark}[section]
\definecolor{hanblue}{rgb}{0.27, 0.42, 0.81}
\begin{document}

\begin{abstract}

\emph{Work-stealing} is a popular technique to implement dynamic
\emph{load balancing} in a distributed manner.
In this approach, each process \emph{owns} a set of tasks that have to
be executed. The \emph{owner} of the set can put tasks in it and can
take tasks from it to execute them. When a process runs out of tasks,
instead of being idle, it becomes a \emph{thief} to steal tasks from a
\emph{victim}. Thus, a work-stealing algorithm provides three
high-level operations: \Put and \Take, which can be invoked only by
the owner, and \Steal, which can be invoked by a thief.

One of the main targets when designing work-stealing algorithms is to
make \Put and \Take as simple and efficient as possible.
Unfortunately, it has been shown that any work-stealing algorithm in
the standard asynchronous model must use expensive \RAW
synchronization patterns or atomic \RMW instructions,
which may be costly in practice.  Thus, prior research has
proposed \emph{idempotent} work-stealing, a relaxation for which there
are algorithms with \Put and \Take devoid of \RMW atomic instructions
and \RAW synchronization patterns; however, \Put uses fences among \W
instructions, and \Steal uses \CAS and fences among \R instructions.
In the TSO model, in which \W (resp. \R) instructions cannot be
reordered, there have been proposed fully fence-free work-stealing
algorithms whose \Put and \Take have similar properties but \Steal
uses \CAS or a \emph{lock}.

This paper considers work-stealing with \emph{multiplicity}, a
relaxation in which every task is taken by \emph{at least} one
operation, with the requirement that any process can extract a task
\emph{at most once}.  Two versions of the relaxation are
considered and two \emph{fully \R/\W} algorithms are presented in the
standard asynchronous shared memory model, both devoid of \RAW
synchronization patterns in all its operations, the second algorithm
additionally being \emph{fully fence-free}.  Furthermore, the
algorithms have logarithmic and constant step complexity,
respectively.  To our knowledge, these are the first algorithm for
work-stealing possessing all these properties.  Our algorithms are
also wait-free solutions of relaxed versions of single-enqueue
multi-dequeuer queues, namely, with multiplicity and weak
multiplicity.  The algorithms are obtained by reducing work-stealing
with multiplicity and weak multiplicity to \MaxReg and \RangeMaxReg, a
relaxation of \MaxReg which might be of independent interest.

An experimental evaluation shows that our fully fence-free
algorithm exhibits better performance than The Cilk, Chase-Lev and
Idempotent Work-Stealing algorithms, while a simple variant of it,
with a \SWAP-based \Steal operation, shows a lower performance
but keeps competitive.
\end{abstract}

\maketitle
\section{Introduction}

\subsubsection*{\bf Context}

\emph{Work-stealing} is a popular technique to implement dynamic
\emph{load balancing} in a distributed manner; it has been used in
several contexts, from programming languages and parallel-programming
frameworks to SAT solver and state space search exploration in model
checking (e.g.~\cite{ACDHLMTUZ09, BJKLRZ95, CGSDKEPS05, FDSZ01, FLR98,
L00, RRPBK07}).

In work-stealing, each process \emph{owns} a set of tasks that have to
be executed.  The \emph{owner} of the set can put tasks in it and can
take tasks from it to execute them. When a process runs out of tasks
(i.e. the set is empty), instead of being idle, it becomes a
\emph{thief} to steal tasks from a \emph{victim}. Thus, a
work-stealing algorithm provides three high-level operations: \Put and
\Take, which can be invoked only by the owner, and \Steal, which can
be invoked by a thief.

One of the main targets when designing work-stealing algorithms is to
make \Put and \Take as simple and efficient as possible.
Unfortunately, it has been shown that any work-stealing algorithm in
the standard asynchronous model must use \RAW synchronization patterns
or atomic \RMW instructions (e.g. \CAS or \TAS)~\cite{AGHKMV11}.  \RAW
synchronization patterns are based on the \emph{flag
principle}~\cite{HS08} (i.e. writing on a shared variable and then
reading another variable), and hence they require the use of memory
\emph{fences} (also called \emph{barriers}) so that the read and write
instructions are not reordered by a compiler; it is well-known that
fences that avoid read and writes to be reordered are highly costly in
real multicore architectures.  While atomic \RMW instructions, that
have high coordination power (which can be formally measured through the
\emph{consensus number} formalism~\cite{H91}), are in principle slower than the
simple \R/\W instructions.\footnote{In practice, contention
might be the dominant factor, namely, an uncontended \RMW instruction
can be faster than contended \R/\W instructions.}  Indeed, the known
work-stealing algorithms in the literature are based on the flag
principle in their \Take/\Steal operations~\cite{CL05, FLR98, HLMS06,
HS02}.  Thus, a way to circumvent the result in~\cite{AGHKMV11} is to
consider work-stealing with relaxed semantics, or to make extra
assumptions on the model.  As far as we know,~\cite{MlVS09}
and~\cite{MA14} are the only works that have followed these
directions.

\emph{Idempotent} work-stealing~\cite{MlVS09} relaxes the semantics
allowing a task to be taken \emph{at least once}, instead of
\emph{exactly once}. This relaxation is useful in contexts where it is
ensured that no task is repeated (e.g. by checking first whether a
task is completed) or the nature of the problem solved tolerates
repeatable work (e.g. parallel SAT solvers).  Three idempotent
work-stealing algorithms are presented in~\cite{MlVS09}, that
insert/extract tasks in different orders (FIFO, LIFO and double-ended queue).
The relaxation allows each
of the algorithms to circumvent the impossibility result
in~\cite{AGHKMV11}, only in its \Put and \Take operations as they use
only \R/\W instructions and are devoid of \RAW synchronization
patterns; however, \Steal uses \CAS. Moreover, the algorithms are not
fully fence-free: \Put uses fences among \W instructions and \Steal
uses fences among \R instructions.  As for progress guarantees, \Put and \Take
are \emph{wait-free} (i.e. each invocation always terminates) while
\Steal is only \emph{nonblocking} (i.e. an invocation may be blocked
by the progress of other invocations).

It is presented in~\cite{MA14} two \emph{fully fence-free}
work-stealing algorithms in the TSO model~\cite{SSONM10}, whose \Put
and \Take operations use only \R/\W instructions but \Steal either uses \CAS or a
\emph{lock}; \Put and \Take are wait-free, and \Steal is either
nonblocking (in the \CAS-based algorithm) or \emph{blocking} (in the
lock-based algorithm).  The algorithms are clever adaptations of the
well-known THE Cilk and Chase-Lev work-stealing algorithms~\cite{CL05,
FLR98} to the TSO model.  Generally speaking, in this model \W
(resp. \R) instructions cannot be reordered, hence fences among \W
(resp. \R) instructions are not needed; additionally, each process has
a local buffer where its \W instructions are stored until they are
eventually propagated to the main memory (in FIFO order).  To avoid
\RAW patterns, it is considered in~\cite{MA14} that buffers are of
bounded size.

In this paper we are interested in the following theoretical question:
whether there are meaningful relaxations of
work-stealing that allows us to design fully \R/\W, fence-free
and wait-free algorithms, i.e. using only \R/\W instructions
(which lay at the lowest level of the consensus number hierarchy~\cite{H91})
without fences, providing strong termination guarantees and
with no extra assumptions to the model of computation.
In other words, if simple synchronization
mechanisms suffice to solve non-trivial relaxations of work-stealing.

\subsubsection*{\bf Contributions}

First, we consider work-stealing with
\emph{multiplicity}~\cite{CRR20}, a relaxation in which every task is
taken by \emph{at least} one operation, and, differently from
idempotent work-stealing, it requires that if a task is taken by
several \Take/\Steal operations, then they must be \emph{pairwise
concurrent}; therefore, no more than the number of processes in the
system can take the same task.  In our relaxation, tasks are
inserted/extracted in FIFO order.  We present a fully \R/\W algorithm
for work-stealing with multiplicity, with its \Put operation being
fence-free and its \Take and \Steal operations being devoid of \RAW
synchronization patterns.  As for progress, all operations are
wait-free with \Put having constant \emph{step complexity} (i.e. the
maximum number of steps/instructions needed to complete) and \Take and
\Steal having logarithmic step complexity.  The algorithm stands for
its simplicity, based on a single instance of a \MaxReg object, hence
showing that work-stealing with multiplicity reduces to \MaxReg.  A
\MaxReg object~\cite{AAC12,JTT00} provides two operations: \MaxR that
returns the maximum value written so far in the object, and \MaxW that
writes a new value only if it is greater than the largest value that
has been written so far.

We also consider a variation of work-stealing with multiplicity in
which \Take/\Steal operations extracting the same task \emph{may not
be concurrent}, however, each process extracts a task \emph{at most
once}; this variant, which insert/extract tasks in FIFO order too, is
called work-stealing with \emph{weak multiplicity}.  For this
relaxation, we present an algorithm, inspired in our first solution,
which uses only \R/\W instructions, is \emph{fully fence-free} and
all its operations are wait-free; furthermore each operation executes
a constant number of instructions to complete.  To our knowledge, this
is the first algorithm for work-stealing having all these properties.
The algorithm is obtained by a reduction to a relaxation of \MaxReg
with a \RMaxR operation that returns a value in a \emph{range} of
values that have been written in the register; the range always
includes the maximum value written so far.  This relaxation, which
might be of independent interests, is called \RangeMaxReg.

Additionally, we show that each of our algorithms can be easily
modified so that every task is extracted by a bounded number of
operations; more specifically, by at most one \Take operation and at
most one \Steal operation.  In the modified algorithms, \Put and \Take
remain the same, and a single \SWAP instruction is added to \Steal,
which remains fence-free.

We stress that our algorithms are also wait-free solutions of
relaxed versions of single-enqueue multi-dequeuer queues, namely,
with multiplicity and weak multiplicity.
To the best of our knowledge, together with idempotent FIFO,
these are the only single-enqueue multi-dequeuer queue relaxations
that have been studied.

Formal specifications and correctness proofs are provided, using the
\emph{linearizability}~\cite{HW90} and \emph{set-linearizable}
correctness formalisms~\cite{CRR18, N94}.  Intuitively,
set-linearizability is a generalization of linearizability in which
several concurrent operations are allowed to be linearized at the same
linearization point.

Work-stealing with multiplicity and idempotent work-stealing are
closely related, but they are not the same.  We observe that the
idempotent work-stealing algorithms in~\cite{MlVS09} allow a task to
be extracted by an unbounded number of \Steal operations; moreover, a
thief can extract the same task an unbounded number of times.  This
observation implies that the algorithms in~\cite{MlVS09} do not solve
work-stealing with multiplicity.  Therefore, the relaxations and
algorithms proposed here provide stronger guarantees than idempotent
work-stealing algorithms, without the need of heavy synchronization
mechanisms.

We complement our results by studying the question if
there are implementations of our algorithms with good performance in
practical settings. We conducted an experimental evaluation comparing
our algorithms to THE Cilk, Chase-Lev and the idempotent work-stealing
algorithms.  In the experiments, the fully fence-free
algorithm exhibits a better performance than the previously mentioned
algorithms, while its bounded version, with a \SWAP-based \Steal
operation, shows a lower performance but keeps competitive.

\subsubsection*{\bf Related Work}

To the best of our knowledge,~\cite{MlVS09} and~\cite{MA14} are the
only works that have been able to avoid costly synchronization
mechanisms in work-stealing algorithms.  The three idempotent
algorithms in~\cite{MlVS09} insert/extract tasks in FIFO and LIFO
orders, and as in a double-ended queue (the owner working on one side
and the thieves on the other).  The experimental evaluation shows that
the idempotent algorithms outperform THE Cilk and Chase-Lev
work-stealing algorithms. The work-stealing algorithms in~\cite{MA14}
are fence-free adaptations of THE Cilk and Chase-Lev~\cite{CL05,
FLR98} to the TSO model. The model itself guarantees that fences among
\W (resp. \R) instructions are not needed, and the authors
of~\cite{MA14} assume that write-buffers in TSO are bounded so that a
fence-free coordination mechanism can be added to the \Take and \Steal
operations of THE Cilk and Chase-Lev.  The experimental evaluation
in~\cite{MA14} shows that their algorithms outperform THE Cilk and
Chase-Lev and sometimes achieve performance comparable to the
idempotent work-stealing algorithms.

The notion of multiplicity was recently introduced in~\cite{CRR20} for
queues and stacks.  In a queue/stack with multiplicity, an item can be
dequeued/popped by several operations but only if they are concurrent.
\R/\W set-linearizable algorithms for queues and stacks with
multiplicity and without \RAW synchronization patterns are provided
in~\cite{CRR20}, and it is noted that these algorithms are also
solutions for work-stealing with multiplicity; the step complexity of
\Enq/\Push, which implements \Put, is $\Theta(n)$ while the step
complexity of \Deq/\Pop, which implements \Take and \Steal, is unbounded,
where $n$ denotes the number of processes.
Our algorithms for work-stealing with multiplicity follows
different principles than those in~\cite{CRR20}.
The weak multiplicity relaxation of work-stealing
follows the same idea as the relaxations in~\cite{CRR20} as it requires that
any algorithm provides ``exact'' responses in sequential executions,
namely, the relaxation applies only if there is no contention.

As far we know, only two single-enqueue multi-dequeuer queue
algorithms have been proposed so far~\cite{D04, JP05}.  The algorithm
in~\cite{D04} is wait-free with constant step complexity and uses \RMW
instructions with consensus number~2, specifically, \SWAP and {\sf
Fetch\&Increment}; however, the algorithm uses arrays of infinite
length. It is explained in~\cite{D04} that this assumption can be
removed at the cost of increasing the step complexity to $O(n)$, where
$n$ denotes the number of processes.  The algorithm in~\cite{JP05} is
wait-free with $O(\log n)$ step complexity and uses {\sf LL/SC}, a
\RMW instruction whose consensus number is $\infty$, and using a
bounded amount of memory.  Our results show that there are
single-enqueue multi-dequeuer queue relaxations that can be solved
using very light synchronization mechanisms.

\subsubsection*{\bf Organization}

The rest of the paper is organized as follows.
Section~\ref{sec-preliminaries} describes the model of computation and
the linearizability and set-linearizability formalisms.
Section~\ref{sec-ws-mult} formally defines work-stealing with
multiplicity and presents a solution based on a single \MaxReg object.
The work-stealing with weak multiplicity and its algorithm are
presented in Section~\ref{sec-ws-nc-mult}.  Simple variants of the
algorithms that bound the number of operations that can extract the
same task are discussed in Section~\ref{sec-bound-mult}, while
Section~\ref{sec-removing-infinite-arrays} shows how our algorithms
can be implemented using finite length arrays.
Section~\ref{sec-idem-neq-mult} explain the differences between
work-stealing with multiplicity and idempotent work-stealing.
Sections~\ref{sec-experiments} presents the results of the experimental
evaluation.  The paper concludes with a final discussion
in~Section~\ref{sec-final-discussion}.

\section{Preliminaries}
\label{sec-preliminaries}

\subsubsection*{\bf Model of Computation}

We consider a standard concurrent shared memory system with $n \geq 2 $
\emph{asynchronous} processes, $p_0, \hdots, p_{n-1}$, which may
\emph{crash} at any time during an execution.
Processes communicate
with each other by invoking \emph{atomic} instructions of \emph{base}
objects: either simple \R/\W instructions, or more powerful \RMW
instructions, such as \SWAP or \CAS. The \emph{index} of process $p_i$
is $i$.

A \emph{concurrent object}, or \emph{data type}, is, roughly speaking,
defined by a state machine consisting of a set of states, a finite set
of operations, and a set of transitions between states. The
specification does not necessarily have to be sequential, namely,
state transitions might involve several invocations.  The following
subsection formalize \emph{sequential} and \emph{set-sequential}
objects.

An \emph{algorithm} for a concurrent object $T$ is a distributed
algorithm $\mathcal A$ consisting of local state machines $A_1,
\hdots, A_n$.  Local machine $A_i$ specifies which instructions of
base objects $p_i$ executes in order to return a response, when it
invokes a (high-level) operation of $T$; each of these instructions is
a \emph{step}.

An \emph{execution} of $\mathcal A$ is a (possibly infinite) sequence
of steps, namely, instructions of base objects, plus invocations and
responses of (high-level) operations of the concurrent object $T$,
with the following properties:

\begin{enumerate}
  \item Each process first invokes an operation, and only when it has
    a corresponding response, it can invoke another operation, i.e.,
    executions are \emph{well-formed}.
  \item For any invocation to an operation {\sf op}, denoted $inv({\sf
      op})$, of a process $p_i$, the steps of $p_i$ between that invocation
    and its corresponding response (if there is one), denoted $res({\sf
      op})$, are steps that are specified by $\mathcal A$ when $p_i$ invokes
    \op.
\end{enumerate}

An operation in an execution is \emph{complete} if both its invocation
and response appear in the execution.  An operation is \emph{pending}
if only its invocation appears in the execution.  A process is
\emph{correct} in an execution if it takes infinitely many steps.  An
execution $E$ is an \emph{extension} of an execution $F$, if $E$ is a
prefix of $F$, namely, $E = F F'$ for some $F'$.

An algorithm, or an operation of it, is \emph{nonblocking} if whenever
processes take steps, at least one of the invocations
terminates~\cite{HW90}.  Formally, in every infinite execution,
infinitely many invocations are completed.  An algorithm, or an
operation of it, is \emph{wait-free} if every process completes each
operation in a finite number of its steps~\cite{H91}.  Formally, if a
process executes infinitely many steps in an execution, all its
invocations are completed.  \emph{Bounded wait-freedom}~\cite{H91a}
additionally requires that there is a bound on the number of steps
needed to terminate.  Thus, a wait-free implementation is nonblocking
but not necessarily vice versa.

The \emph{step complexity} of an operation of an algorithm is the
maximum number steps a process needs to execute in order to return.

In a \RAW synchronization pattern, a process first writes in a shared
variable and then reads another shared variable, maybe executing other
instructions in between.  Note that \RAW patterns can be avoided if in
each (high-level) operation of an algorithm, a process performs a
sequence of writes, or a sequence of reads followed by a sequence of
writes.

We say that an algorithm, or one of its operations, is
\emph{fence-free} if it does not require any specific ordering among
its steps, beyond what is implied by data dependence. Thus, a
fence-free algorithms does not use \RAW synchronization patterns.  In
our algorithms, we use the notation $\{O_1.inst_1, \hdots,
O_x.inst_x\}$ to denote that the instructions $O_1.inst_1, \hdots,
O_x.inst_x$ can be executed in any order, hence no fence is required
between any pair of them.

For sake of simplicity in the analysis, first we will present our
algorithms using arrays of infinite length, and later explain how to
implement them using arrays of finite length.

\subsubsection*{\bf Correctness Conditions}

\emph{Linearizability}~\cite{HW90} is the standard notion used to
identify a correct implementation.  Intuitively, an execution is
linearizable if its (high-level) operations can be ordered
sequentially, without reordering non-overlapping operations, so that
their responses satisfy the specification of the implemented object.

A \emph{sequential specification} of a concurrent object $T$ is a
state machine specified through a transition function $\delta$. Given
a state $q $ and an invocation $inv({\sf op})$, $\delta(q, inv({\sf
op}))$ returns the tuple $(q', res({\sf op}))$ (or a set of tuples if
the machine is \emph{non-deterministic}) indicating that the machine
moves to state $q'$ and the response to \op is $res({\sf op}$).  The
sequences of invocation-response tuples, $\langle inv({\sf op}):
res({\sf op}) \rangle$, produced by the state machine are its
\emph{sequential executions}.

To formalize linearizability we define a partial order $<_E$ on the
completed operations of an execution $E$: ${\sf op} <_E {\sf op}'$ if
and only if $res({\sf op})$ precedes $inv({\sf op}')$ in $E$.  Two
operations are \emph{concurrent}, denoted ${\sf op} \, ||_E \, {\sf op}'$,
if they are incomparable by $<_E$.  The execution is \emph{sequential}
if $<_E$ is a total order.

\begin{definition}[Linearizability] Let $\mathcal A$ be an algorithm
for a concurrent object $T$.  A  finite execution $E$ of $\mathcal A$ is
\emph{linearizable} if there is a sequential execution $S$ of $T$ such
that,

\begin{enumerate}
  \item $S$ contains every completed operation of $E$ and might
    contain some pending operations.  Inputs and outputs of invocations
    and responses in $S$ agree with inputs and outputs in $E$,
  \item for every two completed operations {\sf op} and ${\sf op}'$ in
    $E$, if ${\sf op} <_E {\sf op}'$, then {\sf op} appears before ${\sf
      op}'$ in $S$.
\end{enumerate}

We say that $\mathcal A$ is \emph{linearizable} if each of its
finite executions is linearizable.
\end{definition}

Roughly speaking, while linearizability requires a total order on the
operations, set-linearizability~\cite{CRR18,N94} allows several
operations to be linearized at the same linearization point.
Figure~\ref{fig-example-linear} schematizes the differences between
the two consistency conditions where each double-end arrow represents
an operation execution.  It is known that set-linearizability has
strictly more expressiveness power than linearizability. Moreover, as
linearizability, set-linearizability is \emph{composable} (also called
\emph{local})~\cite{CRR18}.

\begin{figure}[ht]
  \begin{center}
    \vspace{0.7cm}
    \includegraphics[scale=0.7]{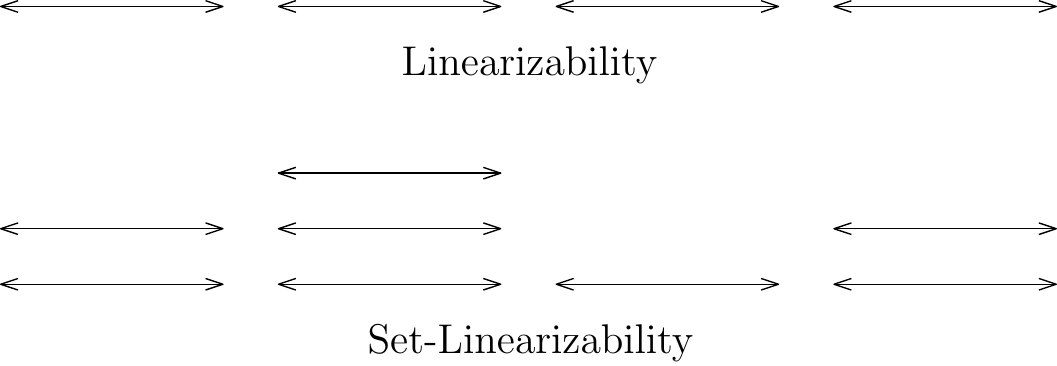}
    \caption{Graphical description of linearizability and
      set-linearizability.}
    \label{fig-example-linear}
  \end{center}
\end{figure}

A \emph{set-sequential specification} of a concurrent object differs
from a sequential execution in that $\delta$ receives as input the
current state $q$ of the machine and a set $Inv = \{ inv({\sf op_1}),
\hdots,inv({\sf op}_t) \}$ of operation invocations that happen
concurrently. Thus $\delta(q, Inv)$ returns $(q', Res)$ where $q'$ is
the next state and $Res = \{ res({\sf op_1}), \hdots, res({\sf op}_t)
\}$ are the responses to the invocations in $Inv$
(if the machine is non-deterministic, $Res$ is a set of sets of responses).
The sets $Inv$ and$Res$ are called \emph{concurrency classes}.
The sequences of invocation-response concurrency classes, $\langle INV:RES \rangle$, produced by the state machine are its
\emph{set-sequential executions}.
Observe that in a
sequential specification all concurrency classes have a single
element.

\begin{definition}[Set-linearizability]
Let $\mathcal A$ be an algorithm for a concurrent object $T$.
A finite execution $E$ of $\mathcal A$ is \emph{set-linearizable} if there is
a set-sequential execution $S$ of $T$ such that

\begin{enumerate}
  \item $S$ contains every completed operation of $E$ and might
    contain some pending operations.  Inputs and outputs of invocations
    and responses in $S$ agree with inputs and outputs in $E$.
  \item For every two completed operations {\sf op} and ${\sf op}'$ in
    $E$, if ${\sf op} <_E {\sf op}'$, then {\sf op} appears before ${\sf
      op}'$ in $S$.
\end{enumerate}

We say that $\mathcal A$
is \emph{set-linearizable} if each of its finite executions is set-linearizable.
\end{definition}

\section{Work-Stealing with Multiplicity}
\label{sec-ws-mult}

Work-stealing with \emph{multiplicity} is a relaxation of the usual
work-stealing in which, roughly speaking, every task is extracted
\emph{at least once}, and if it is extracted by several operations,
they must be \emph{concurrent}.
In the formal set-sequential specification below
(and in its variant in the next section),
tasks are inserted/extracted in FIFO order but it can be easily
adapted to encompass other orders (e.g. LIFO).
Figure~\ref{fig-example-execution} depicts an example of a
set-sequential execution of the work-stealing with multiplicity,
where concurrent \Take/\Steal operations can extract the same task.

\begin{figure}[ht]
  \begin{center} \vspace{0.4cm}
    \includegraphics[scale=0.52]{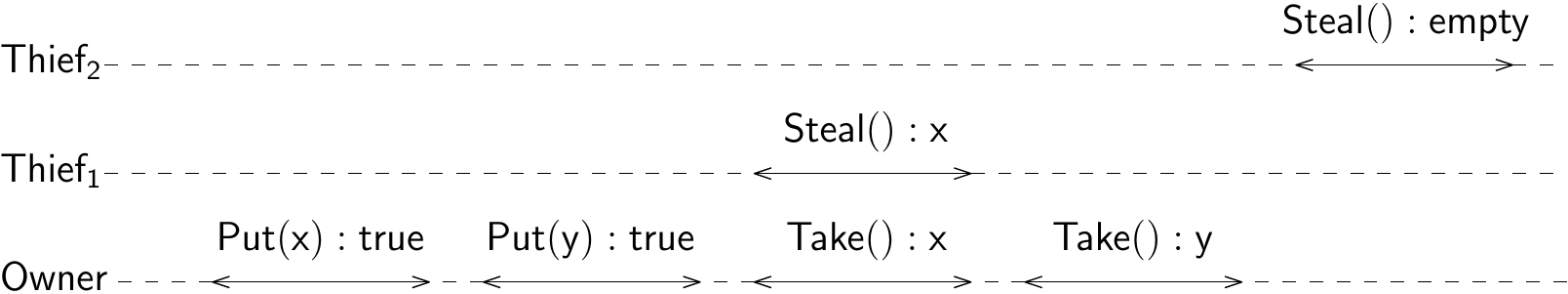}
    \caption{A set-sequential execution of work-stealing with
      multiplicity.}
    \label{fig-example-execution}
  \end{center}
\end{figure}

\begin{definition}[(FIFO) Work-Stealing with Multiplicity]
  \label{def-ws-mult}
  The universe of tasks that the owner can put is
  $\mathbf{N} = \{ 1, 2, \hdots \}$, and the set of states $Q$ is the
  infinite set of finite strings $\mathbf{N}^*$.  The initial state is
  the empty string, denoted~$\epsilon$.  In state $q$, the first element
  in $q$ represents the \emph{head} and the last one the
  \emph{tail}. The transitions are the following:

  \begin{enumerate}

    \item $\forall q \in Q\delta(q, \Put(x)) = (q \cdot x, \langle \Put(x): \true \rangle).$

    \item $\forall q \in Q$, $0\leq t\leq n-1$, $x \in \mathbf{N},\
      \delta(x \cdot q, \{\Take(), \Steal_1(), \hdots, \Steal_t() \})
      = \\ (q, \{ \langle \Take():x \rangle, \langle \Steal_1():x \rangle,
            \hdots, \langle \Steal_t():x \rangle \})$.

    \item $\forall q \in Q$, $1\leq t\leq n-1$, $x \in \mathbf{N},
      \delta(x \cdot q, \{\Steal_1(), \hdots, \Steal_t() \}) =
        (q, \{\langle \Steal_1():x \rangle, \hdots, \langle
        \Steal_t():x \rangle\}).$

    \item $\delta(\epsilon, \Take()) = (\epsilon, \langle \Take(): \epty
      \rangle).$

    \item $\delta(\epsilon, \Steal()) = (\epsilon, \langle \Steal(): \epty
      \rangle).$

  \end{enumerate}
\end{definition}

Let $\mathcal A$ be a linearizable algorithm for work-stealing with
multiplicity.  Note that items 2 and 3 in Definition~\ref{def-ws-mult}
and the definition of set-linearizability directly imply that in every
execution of $\mathcal A$, the number of \Take/\Steal operations that
take the same task is at most the number of processes in the system,
as the operations must be pairwise concurrent to be set-linearized
together.  Furthermore, every \emph{sequential} execution of $\mathcal
A$ looks like an ``exact'' solution for work-stealing, as every
operation is linearized alone, by definition of set-linearizability;
formally, every sequential execution of $\mathcal A$ is sequential
execution of (FIFO) work-stealing. We call this property
\emph{sequentially-exact}.  Thus, in the absence of contention,
$\mathcal A$ provides an exact solution for work-stealing.

\begin{remark}
\label{remark-seq-exact-set-lin}
Every set-linearizable algorithm for work-stealing with multiplicity
is sequentially-exact.
\end{remark}

\subsection{Work-Stealing with Multiplicity from \MaxReg}
\label{sec-ws-mult-max-reg}

Here we show that work-stealing with multiplicity can be reduced to
\MaxReg: a single wait-free linearizable \MaxReg object suffices to
wait-free set-linearizable solve work-stealing with multiplicity
without fences in all operations.  Roughly speaking, the \MaxReg
object synchronizes the head of the queue.  Therefore, there are fully
\R/\W algorithms for work-stealing with multiplicity as there are
\R/\W wait-free linearizable algorithms for \MaxReg.  We also argue
that using the \MaxReg algorithm in~\cite{AAC12}, the resulting
work-stealing with multiplicity algorithm has logarithmic step
complexity and do not use \RAW synchronization patterns in \Take and
\Steal.  Section~\ref{sec-ws-nc-mult} shows that the algorithm leads
to fully fence-free \R/\W work-stealing algorithm with constant step
complexity in all its operations.

Figure~\ref{figure-max-reg-mult} presents \WFWSM, a set-linearizable
algorithm for work-stealing with multiplicity. The algorithm is based
on a single wait-free linearizable \MaxReg object, which provides two
operations: \MaxR that returns the maximum value written so far in the
object, and \MaxW that writes a new value only if it is greater than
the largest value that has been written so far.

\begin{figure}[ht] \centering{
\fbox{
\begin{minipage}[t]{150mm} \small
\renewcommand{\baselinestretch}{2.5} \resetline
\begin{tabbing} aaaa\=aa\=aa\=aa\=aa\=aa\=aa\=\kill 

{\bf Shared Variables:}\\
\> $Head:$ atomic \MaxReg object initialized to $1$\\
\> $Tasks[1, 2, \hdots]:$ array of atomic \R/\W objects
with\\ \>\>\>\>\>\>\> the first two objects initialized to $\bot$\\ \\

{\bf Persistent Local Variables of the Owner:}\\
\> $tail \leftarrow 0$\\ \\

{\bf Operation} $\Put(x)$: \\
\line{A01} \> $tail \leftarrow tail+1$\\
\line{A02} \> $\{Tasks[tail].\W(x),\ Task[tail + 2].\W(\bot)\}$\\
\line{A03} \> {\bf return } {\sf true}\\
{\bf end} \Put \\ \\

{\bf Operation} $\Take()$: \\
\line{A04} \> $head \leftarrow Head.\MaxR()$\\
\line{A05} \> {\bf if $head \leq tail$ then}\\
\line{A06} \> \> $\{ x \leftarrow Tasks[head].\R(),
Head.\MaxW(head+1)\}$\\
\line{A09} \> \> {\bf return} $x$\\
\line{A10} \> {\bf end if}\\
\line{A11} \> {\bf return} \epty\\
{\bf end} \Take \\ \\

{\bf Operation} $\Steal()$: \\
\line{A12} \> $head \leftarrow Head.\MaxR()$\\
\line{A13} \> $x \leftarrow Tasks[head].\R()$ \\
\line{A14} \> {\bf if $x \neq \bot$ then}\\
\line{A16} \> \> $Head.\MaxW(head+1)$\\
\line{A17} \> \> {\bf return} $x$\\
\line{A18} \> {\bf end if}\\
\line{A19} \> {\bf return} \epty\\
{\bf end} \Steal

\end{tabbing}
\end{minipage}
}
\caption{\WFWSM: a \MaxReg-based set-linearizable
algorithm for work-stealing with multiplicity.}
\label{figure-max-reg-mult}
}
\end{figure}

In \WFWSM, the tail of the queue is stored in the local persistent
variable $tail$ of the owner, while the head is stored in the shared
\MaxReg $Head$.
When the owner wants to put a new task, it first locally increments
$tail$ (Line~\ref{A01}) and then stores the task in the corresponding
entry of $Tasks$ and marks one more entry with $\bot$
(Line~\ref{A02}); $\bot$ indicates lack of tasks.  Recall that the
notation in Line~\ref{A02} denotes that the instructions can be
executed in any order.
When the owner wants to take a task, it first reads the current head
of the queue from $Head$ (Line~\ref{A04}) and then, if there are tasks
available (i.e. the head is less or equal that the tail), it reads the
task at the head, updates $Head$ and finally returns the task
(Lines~\ref{A06} and~\ref{A09}); if there are no tasks available, the
owner returns \epty (Lines~\ref{A11}).
When a thief wants to steal a task, it first reads the current value
of $Head$ (Line~\ref{A12}) and then read that entry of $Tasks$
(Line~\ref{A13}).  If it reads a task (i.e. a non-$\bot$ value), it
updates $Head$ and then returns the task (Lines~\ref{A16}
and~\ref{A17}).  Otherwise, all tasks have been extracted and it
returns \epty (Line~\ref{A19}).

The semantics of \MaxW guarantees that $Head$ contains the
current value of the head of the queue at all times, as a ``slow''
process cannot ``move back'' the head by writing a smaller value in
$Head$ (in Lines~\ref{A06} or~\ref{A16}).  Thus, the \MaxReg $Head$
acts as a sort of barrier in the algorithm.  The net effect of this is
that the only way that two \Take/\Steal operations return the same
task is because they are concurrent, reading the same value from $Head$.

Note that if only the first object in $Tasks$ is initialized to $\bot$
(and hence \Put is modified accordingly), it is possible that a thief
reads an value from $Tasks$ that has not been written by the owner: in
an execution with a single $\Put(x)$ operation, the steps in
Line~\ref{A02} could be executed $Tasks[1].\W(x)$ first and then
$Task[2].\W(\bot)$ with a sequence of two \Steal operations completing
in between, hence the second operations reading $Tasks[2]$ which has
not been written yet by the owner, which is a problem if $Tasks[2]$
contains a value distinct from non-$\bot$ value.

\begin{theorem}
\label{theo-wf} Algorithm \WFWSM is a set-linearizable wait-free
fence-free algorithm for work-stealing with multiplicity using
atomic \R/\W objects and a single atomic \MaxReg object.
Moreover, all operations have constant step complexity
and \Put is fully \R/\W.
\end{theorem}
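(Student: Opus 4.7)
The plan is to split the proof into two parts: the easy syntactic properties, verified by inspection of the pseudocode, and the correctness property, which requires exhibiting a set-linearization.

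\textbf{Easy properties.} For step complexity and wait-freedom, I would simply count: \Put executes one local increment and two \W instructions; \Take executes one \MaxR, a comparison, and at most one \R together with one \MaxW; \Steal executes one \MaxR, one \R, a comparison, and at most one \MaxW. Since the \MaxReg object is wait-free and each operation invokes a constant number of its instructions, all three operations are wait-free with constant step complexity at the algorithm level, and \Put uses only \R/\W instructions. For fence-freedom I would check each operation: \Put only performs \W instructions (with data dependence on the local $tail$); \Take performs \R on $Head$, a branch on a data-dependent value, and then two mutually reorderable instructions (an \R and a \MaxW), so by the paper's definition no \RAW pattern is required; \Steal performs two \R instructions (the second data-dependent on the first), a branch, and a single trailing \W---a read-sequence followed by writes, again no \RAW.

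\textbf{Set-linearization.} Given a finite execution $E$, for each $k \geq 1$ let $\sigma_k$ be the first real-time occurrence of a $Head.\MaxW(k+1)$ step in $E$ (the one that actually raises $Head$ from $k$ to $k+1$), or $\infty$ if there is none. I would construct $S$ by placing: the $k$-th \Put at its write $Tasks[k].\W(x_k)$; all \Take/\Steal operations that return the task $x_k$ as a single concurrency class located at $\sigma_k$; every \Take returning \epty at its \MaxR step; every \Steal returning \epty at its $Tasks[head].\R$ step. The central lemma is that any two operations op and op' returning the same $x_k$ overlap in real time: both execute a $\MaxR$ returning $k$ and a subsequent $\MaxW(k+1)$, so if op completed before op' started then op'.MaxR would follow op.MaxW$(k+1)$ and thus return at least $k+1$, a contradiction. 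The same argument shows that $\sigma_k$ lies inside the real-time interval of every operation returning $x_k$ (its \MaxR precedes $\sigma_k$, and its own $\MaxW(k+1)$ occurs at or after $\sigma_k$), so the group can legitimately be placed at $\sigma_k$.

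\textbf{Matching the specification.} The global order in $S$ is determined by the real-time order of the linearization points. Real-time order between non-overlapping operations is preserved because each point falls strictly inside its operation's interval. I would then check against Definition~\ref{def-ws-mult}: (i)~$\sigma_k < \sigma_{k+1}$ because $Head$ is monotone; (ii)~$Put(x_k)$ precedes $\sigma_k$ because $\sigma_k$ is performed by some operation that first read $Tasks[k]$ non-$\bot$, and that read is after the owner's write of $Tasks[k]$ (or, if $\sigma_k$ lies in a \Take and the parallel notation reorders \MaxW before the \R in A06, the write of $Tasks[k]$ still precedes this \Take because the owner is sequential and $Put(x_k)$ completed before the \Take was invoked); therefore the state of the queue in $S$ at $\sigma_k$ has $x_k$ at its head, matching transitions 2 and 3 of Definition~\ref{def-ws-mult}. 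For an empty \Take linearized at its \MaxR: reading $head > tail$ forces $head = tail+1$, so every $\sigma_1,\dots,\sigma_{tail}$ has already occurred, all owner's \Puts are placed before by owner-sequentiality, and in $S$ the queue is empty. For an empty \Steal at its $Tasks[h].\R$: $Tasks[h]=\bot$ implies $Put(x_h)$ has not yet written $x_h$; consequently $\sigma_h$ (which must follow that write) has not yet occurred, so in $S$ every $\sigma_i$ with $i<h$ precedes and $\sigma_h,\ldots$ as well as $Put(x_h)$ follow---the queue is empty.

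\textbf{Main obstacle.} The hard part is precisely the interplay between the unordered instructions in Line~\ref{A06} of \Take and the linearization of empty \Steals. The parallel notation allows $\sigma_k$ to be the \MaxW of a \Take whose own $Tasks[k].\R$ occurs later; arguing that this is harmless relies on the owner's sequentiality (Put$(x_k)$ has already completed before this \Take began). Conversely, for an empty \Steal at position $h$, one must exclude the possibility that $\sigma_h$ has already happened, which requires the observation that any instruction contributing to $\sigma_h$ must have first read $Tasks[h]$ as a non-$\bot$ value---hence strictly after the owner's $Tasks[h].\W(x_h)$, which is strictly after this empty \Steal's bot-reading $\R$. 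The initialization of both $Tasks[1]$ and $Tasks[2]$ to $\bot$, together with the two-step-ahead reset in A02, is exactly what ensures that every \R of $Tasks[h]$ performed by a \Steal reads a well-defined value, which is the last remaining ingredient.
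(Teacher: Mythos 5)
Your proposal is correct and follows essentially the same route as the paper's proof: the same easy syntactic checks, the same linearization points for \Put and for the concurrency classes of operations returning a common task (your globally-first $Head.\MaxW(k+1)$ coincides with the paper's first such step among the returning operations), and the same key monotonicity lemma showing such operations are pairwise concurrent; the only deviations are that you place an empty \Steal at its $Tasks[head].\R()$ step (Line~\ref{A13}) where the paper uses the $Head.\MaxR()$ step (Line~\ref{A12})---both choices are justified by the same observation that $Head$ cannot pass $h$ while $Tasks[h]$ is still $\bot$---and that you omit the paper's preliminary step of extending $E$ so that no operations are pending, which is needed so that a pending operation that has already executed $Head.\MaxW(k+1)$ does not leave $x_k$ undequeued in $S$ while $x_{k+1}$ is dequeued. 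Apart from that one standard bookkeeping step, nothing is missing.
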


\begin{proof}
Clearly, the algorithm is wait-free as every
operation requires a constant number of steps to terminate,
hence having constant step complexity,
and $Head$ is wait-free, by assumption.
Observe that \Put uses only \R/\W.
Moreover, the algorithm  does not require any specific
ordering among its steps, beyond what is implied by data dependence,
therefore it is fully fence-free.

Before proving that \WFWSM is set-linearizible, we first observe
that at any time the thieves read the range of $Tasks$ that
the owner has already initialized; more specifically,
every \Steal operation reads from $Tasks$ (in Line~\ref{A13}),
a value that was written by the owner, either $\bot$ or a task.

At any time during an execution, the range
$Tasks[Head, Head+1, \hdots, ]$ contains a (possibly empty) sequence of
tasks followed by at least one $\bot$ value, considering the entries
in index-ascending order. The claim is true initially as the first
two entries if $Tasks$ are initialized to $\bot$.
Every time the owner stores a new task, it initializes a new entry of
$Task$ to $\bot$ (in Line~\ref{A02}); hence the claim holds at any time,
as $Head$ is incremented only if the owner or a thief sees that
$Tasks[Head]$ contains a non-$\bot$ value (in Lines~\ref{A06} or~\ref{A16}).
Note that the order of the instructions in Line~\ref{A02} is irrelevant.

We now prove that \WFWSM is set-linearizible. Consider any finite
execution $E$ of it. Since we already argued that the algorithm
is wait-free, there is a finite extension of $E$ in which all
its operations are completed and no new operations starts.
Thus, we can assume that there are no pending operations in $E$.

First, note that the semantics of \MaxW implies that there is no pair
of non-concurrent \Take/\Steal operation that return the same task: if
the two operation are not concurrent, then the first one increments
the value of $Head$ (and this value can only increment), and hence the
second cannot read the same tasks from $Tasks$. Thus, we have:

\begin{remark}
\label{remark-concurrency} If a task that is returned by more
than one \Take/\Steal operation, then
these operations are pairwise concurrent.  Thus, two distinct \Take
operations of the owner cannot return the same task.
\end{remark}

The main observation for the set-linearizability proof is that at any
time during the execution, the state of the object is represented by
the tasks in the range $Tasks[Head, Head+1, \hdots ]$,
i.e. the sequence of non-$\bot$ values (in index-ascending order)
that written by the owner in that range.
The set-linearization $\SetLin(E)$ of $E$ is obtained as follows:

\begin{itemize}

\item Every \Put operation is set-linearized \emph{alone}
(i.e. in a concurrency class containing only the operation) placed
at its step corresponding to  $Tasks[tail].\W(x)$ in $E$ (Line~\ref{A02}).

\item For every task that is returned by at least one \Take/\Steal
operation, all these operations are set-linearized in the same
concurrency class placed at the first step $e$ in $E$ that corresponds
to $Head.\MaxW(head+1)$ (either in Line~\ref{A06} or~\ref{A16}) among
the steps of the operations.
Note that that $e$ occurs between the invocation and response of every
operation in the concurrency class: since the operations return the
same task, all of them execute the \MaxR steps in Lines~\ref{A04}
or~\ref{A12} before $e$, and, by definition, $e$ appears in $E$ before
any other operation executes its step corresponding to $Head.\MaxW(head+1)$.
Observe that the order in which the instructions in Line~\ref{A06} is
executed is irrelevant.

\item Every \Take operation returning \epty is set-linearized
alone, placed at its step in $E$ corresponding to $Head.\MaxR()$
(Line~\ref{A04}).

\item Every \Steal operation returning \epty is set-linearized
alone, placed at its step in $E$ corresponding to $Head.\MaxR()$
(Line~\ref{A12}).

\end{itemize}

Every concurrency class of $\SetLin(E)$ is placed at a step of $E$
that lies between the invocation and response of each operation in the
concurrency class, which immediately implies that $\SetLin(E)$
respects the partial order $<_E$ of $E$. Thus, to conclude that
$\SetLin(E)$ is a set-linearization of $E$, we need to show that it is
indeed a set-sequential execution of the work-stealing with
multiplicity.

First, note that a task can be extracted by a \Take/\Steal operation
only if the \Put operation that stores the task executes its step
corresponding to $Tasks[tail].\W(x)$ (in Line~\ref{A02}) before
the \Take/\Steal operation reads the entry of $Tasks$
where the task is stored.  Thus, in $\SetLin(E)$ every task is
inserted before it is extracted.

Now, \Put stores tasks in $Tasks$ in index-ascending order.
Due to the semantics of \MaxReg, $Head$ never ``moves back'',
i.e. it only increments one by one, and hence
\Take and \Steal extract tasks in index-ascending order too.
It follows that tasks in $\SetLin(E)$ are inserted/extracted in FIFO order.
More specifically, for any
concurrency class $C$ of $\SetLin(E)$ with \Take/\Steal operations
that return the same tasks $x$, right before the step $e$ of $E$ where
$C$ is set-linearized, we have that $x$ is task with smallest index
(left-most) in the range $Tasks[Head, Head+1, \hdots]$, and thus indeed
the operations in $C$ get the oldest task in the object.

It only remains to be argued that any \Take/\Steal operation that
returns \epty, does so correctly, i.e., each of this operations are
set-linearized at a step of $E$ at which $Tasks[Head, Head+1, \hdots]$
is empty, i.e. all its entries initialized by the owner in that range
contain $\bot$.

Consider any \Take operation in $E$ that returns $\epty$.  Observe
that this can happen only if the owner sees that $head > tail$, namely
the conditional is not satisfied in Line~\ref{A05}.  Clearly, this is
possible only when no item has been inserted at all or when all items
have been extracted, and hence $Tasks[Head, Head+1, \hdots]$ is empty.
Consider now any \Steal operation in $E$ that returns $\epty$.  This
is possible only the thief reads $\bot$ from $Tasks$ in
Line~\ref{A13}, and since we already argued that the owner insert
tasks in ascending order, we have that at that time $Tasks[Head,
Head+1, \hdots]$ is empty.

We conclude that $\SetLin(E)$ is a valid set-sequential execution of
work-stealing with multiplicity, and as it respects the partial order
$<_E$ of $E$, we have that it is a set-linearization of $E$, and
therefore \WFWSM is set-linearizable. The theorem follows.
\end{proof}

When we replace $Head$ with
the wait-free linearizable \R/\W\ \MaxReg algorithm
in~\cite{AAC12}, whose step complexity is $O(\log m)$, where $m \geq 1$
is the maximum value that can be stored in the object, the step
complexity of \WFWSM is bounded wait-free with logarithmic step
complexity too. In the resulting algorithm at most $m$ tasks can be inserted.
Since the algorithm in~\cite{AAC12} do not use
\RAW synchronization patterns, the resulting algorithm does not uses
those patterns either.

\begin{theorem}
\label{theo-wf-log} If $Head$ is an instance of the wait-free
linearizable \R/\W\ \MaxReg algorithm in~\cite{AAC12},
\WFWSM is linearizable and
fully \R/\W with \Take and \Steal having step complexity
$O(\log m)$, where $m$ denotes the maximum number or tasks that can
be inserted in an execution. Furthermore, \Take and \Steal do not
use \RAW synchronization patterns.
\end{theorem}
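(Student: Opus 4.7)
The plan is to treat this theorem as a routine consequence of Theorem~\ref{theo-wf} together with the specific properties of the \MaxReg implementation of~\cite{AAC12}, so the work reduces to checking that each of the four claimed properties (correctness, use of only \R/\W instructions, step complexity, and absence of \RAW patterns) is preserved when we substitute that implementation for the abstract object $Head$.

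First I would invoke Theorem~\ref{theo-wf}, which already proves that \WFWSM set-linearizes work-stealing with multiplicity whenever $Head$ is replaced by any wait-free linearizable \MaxReg implementation; since the construction in~\cite{AAC12} is wait-free and linearizable, composability of (set-)linearizability yields the correctness claim directly, and no additional linearization argument is required. For the ``fully \R/\W'' claim, I would combine two observations: (i) \Put is already fully \R/\W by Theorem~\ref{theo-wf}, and (ii) the \MaxReg of~\cite{AAC12} is implemented entirely out of \R/\W base objects, so after inlining its code the operations \Take and \Steal also consist exclusively of \R/\W steps on the arrays of the \MaxReg plus on $Tasks$.

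For the step complexity bound, I would note that \Take and \Steal each execute a constant number of steps at the \WFWSM layer plus at most one \MaxR and one \MaxW call on $Head$, each of which takes $O(\log m)$ steps in~\cite{AAC12} when the \MaxReg stores values in $\{1,\dots,m\}$; since we cap the number of \Put operations at $m$, this gives $O(\log m)$ overall. For the claim that \Take and \Steal do not use \RAW patterns, I would argue at the inlined level: Theorem~\ref{theo-wf} already shows that the \WFWSM layer introduces no ordering beyond data dependence, and the \MaxReg of~\cite{AAC12} is itself \RAW-free internally. It then suffices to check that no cross-layer \RAW arises. Because \MaxR uses only \R\ steps in~\cite{AAC12}, every shared write performed by \Take or \Steal occurs inside the (optional) \MaxW sub-invocation that closes the operation; no write in the operation is followed by a read of a different shared variable, so the only shared \RAW potentially remaining is the one internal to \MaxW, which does not exist by the guarantee of~\cite{AAC12}.

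The main obstacle I anticipate is precisely this last step: formally verifying that no cross-layer \RAW is introduced. This hinges on the fact that \MaxR contains no \W\ steps, which must be read off from the \MaxReg algorithm of~\cite{AAC12}; once that is established, the read of $Tasks[head]$ in Line~\ref{A06} (or Line~\ref{A13}), which is a read sandwiched between two \MaxReg calls, is preceded only by reads and followed only by writes, so no \RAW can be formed. Everything else is straightforward bookkeeping inherited from Theorem~\ref{theo-wf}.
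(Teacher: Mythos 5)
Your proposal is correct and follows essentially the same route as the paper: linearizability by composability, the $O(\log m)$ bound from the per-operation cost of \MaxR/\MaxW in~\cite{AAC12}, and \RAW-freedom from the fact that \MaxR is read-only and \MaxW is a sequence of reads followed by writes, so each \Take/\Steal is globally reads-then-writes. The only difference is presentational: the paper substantiates the \RAW-freedom of the \cite{AAC12} construction by sketching its binary-tree structure, whereas you cite it as a black-box property — which you correctly flag as the one fact that must be read off from that algorithm.
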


\begin{proof}The algorithm remains linearizable, by
composability of linearizability~\cite{HW90}.
While the step complexity of \Put is clearly $O(1)$,
the step complexity of \Take and \Steal is $O(\log m)$
as the step complexity \MaxR and \MaxW of the \MaxReg algorithm
in ~\cite{AAC12} is $O(\log m)$.

We now argue that \Take and \Steal do not use \RAW synchronization
patterns.  The reason is that
the \MaxReg algorithm in~\cite{AAC12} does not use this
synchronization mechanism.  Roughly speaking, the algorithm consists
of a binary tree of height $O(\log m)$ with an atomic bit in each of
its nodes.  When a process wants to perform \MaxR, it reads the bits
in a path of the tree from the root to a leaf and then returns a
value, according to the leaf it reached; the next node in the path the
process reads depends on the value of the current node.  When a
process wants to perform \MaxW, it reads the bits in a path from the
root to a leaf, which is on function of the binary representation of
the value the process wants to write; then, if the new value is larger
than the current one, in a bottom-up manner, it writes 1 in every node
in the path with 0 (for the algorithm to be linearizable, the writes
should occur in this order).  Thus, we have that \MaxR consists of a
sequence of reads and \MaxW consists of a sequence of reads followed
by a (possibly empty) sequence of writes.  Therefore, \Take/\Steal of
\WFWSM consists of a sequence of reads followed by a (possibly empty)
sequence of writes, and thus the operation does not use \RAW
synchronization patterns.
\end{proof}

\section{Work-Stealing with Weak Multiplicity}
\label{sec-ws-nc-mult}

In the context of work-stealing, a logarithmic step complexity of the
\Take operation may be prohibitive in practice. Ideally, we would like to
have constant step complexity, in all operations if possible, and using
simple synchronization mechanisms of course.
In this section, we propose a variant of work-stealing with
multiplicity, which admits fully \R/\W fence-free implementations
with constant step complexity in all its operations.
Intuitively, the variant requires that every task is extracted at
least once, but now every process extracts a task \emph{at most once},
hence \Take/\Steal operations returning the same task \emph{may not}
be concurrent.  Therefore, the relaxation retains the property that
the number of operations that can extract the same task is at most the
number of processes in the system.  We call this relaxation
work-stealing with \emph{weak} multiplicity.

\begin{figure}[ht]
\begin{center}
\includegraphics[scale=0.6]{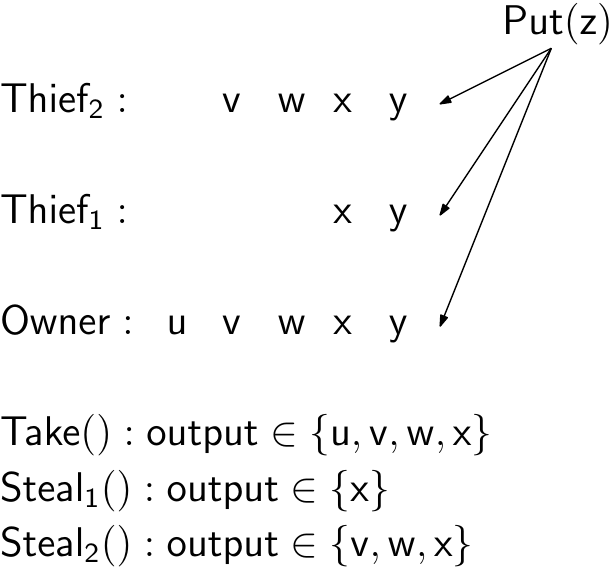}
\caption{A schematic view of work-stealing with weak multiplicity.}
\label{fig-state-weak}
\end{center}
\end{figure}

Figure~\ref{fig-state-weak} depicts a schematic view
of a state of work-stealing with weak multiplicity.
Intuitively, at any state, each process has its own queue of tasks.
When the owner inserts a new task, it concurrently places the task
in all queues, as shown in the figure.
Therefore, in any state, for any pair of process' queues,
one of them is suffix of the other.
A \Take/\Steal operation can return any task from its
queue that is no ``beyond'' the first task of the \emph{shortest} queue
in the state. In the example, a \Steal operation of thief $p_1$,
denoted $\Steal_1()$, can return only $x$, as $p_1$ has the shortest
queue in the state, while a \Take operation of the owner can return
any task in $\{u,v,w,x\}$, which contains any task from the beginning
of its queue up to $x$.

\begin{figure}[ht]
\begin{center}
\includegraphics[scale=0.45]{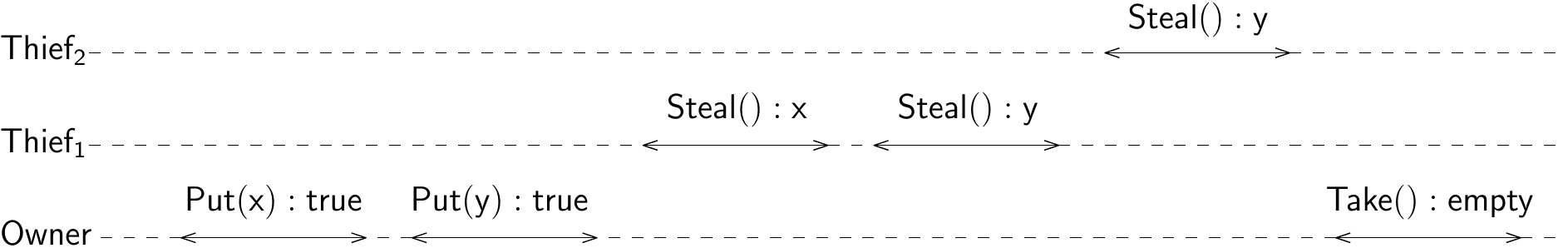}
\caption{A sequential execution of work-stealing with
weak multiplicity.}
\label{fig-example-execution-weak}
\end{center}
\end{figure}

Figure~\ref{fig-example-execution-weak} shows an
example of a sequential execution of work-stealing with
weak multiplicity; note that \Take/\Steal operations
are allowed to get the same item, although they are not concurrent.

The next \emph{sequential} specification formally defines
work-stealing with weak multiplicity.
{Without loss of generality, the specification assumes, that $p_0$
is the owner, and each invocation/response of
thief $p_i$ is subscripted with its index,
which then belongs to $\{1, \hdots, n-1\}$.}

\begin{definition}[(FIFO) Work-Stealing with Weak
Multiplicity]
  \label{def-ws-nc-mult} The universe of tasks that the owner can put
  is $\mathbf{N} = \{ 1, 2, \hdots \}$, and the set of states $Q$ is the
  infinite set of $n$-vectors of finite strings $\mathbf{N}^* \times
  \hdots \times \mathbf{N}^*$, with the property that for any two pairs
  of strings in a vector, one of them is a \emph{suffix of the other}.
  The initial state is the vector with empty strings, $(\epsilon, \hdots, \epsilon)$.
  The transitions are the following:
  \begin{enumerate}

    \item $\forall (t_0, \hdots, t_{n-1}) \in Q,         \delta((t_0,
      \hdots, t_{n-1}), \Put(x)) =
        ((t_0 \cdot x, \hdots, t_{n-1} \cdot x),
        \langle \Put(x): \true \rangle).$

    \item {$\forall (t_0, \hdots, t_{n-1}) \in Q$ such that
    $t_0 = x_1 \cdots x_j \cdot q \neq \epsilon$ with $j \geq 1$
    and $x_j \cdot q$ being the shortest string in the state
    (possibly with $x_j \cdot q = \epsilon$), $        \delta((t_0,
    \hdots, t_{n-1}), \Take()) =  \{ ((\widehat t_0, t_1, \hdots,
    t_{n-1}), \langle \Take():x_k \rangle) \}$, where $k \in \{1,
    \hdots, j\}$ and $\widehat t_0 = x_{k+1} \cdots x_j \cdot q$.}

    \item {$\forall (t_0, \hdots, t_{n-1}) \in Q$ such that
    $t_i = x_1 \cdots x_j \cdot q \neq \epsilon$ with $j \geq 1$,
    $i \in \{1, \hdots, n-1\}$ and $x_j \cdot q$ being the shortest
    string in the state (possibly with $x_j \cdot q = \epsilon$), $
    \delta((t_0, \hdots, t_{n-1}), \Steal_i()) =
    \{ ((t_0, \hdots, t_{i-1}, \widehat t_i, t_{i+1}, \hdots, t_{n-1}),
    \langle \Steal_i():x_k \rangle) \}$,
    where $k \in \{1, \hdots, j\}$ and $\widehat{t_i} = x_{k+1} \cdots
    x_j \cdot q$.}

    \item {$\forall (\epsilon, t_1, \hdots, t_{n-1}) \in Q$, $
        \delta((\epsilon, t_1, \hdots, t_{n-1}), \Take()) =
        ((\epsilon, t_1, \hdots, t_{n-1}), \langle \Take():\epsilon
        \rangle).$}

    \item {$\forall (t_0, \hdots, t_{i-1}, \epsilon, t_{i+1}, \hdots,
        t_{n-1}) \in Q$ with $i \in \{1, \hdots, n-1\}$,$
        \delta((t_0, \hdots, t_{i-1}, \epsilon, t_{i+1}, \hdots,
        t_{n-1}), \Steal_i()) = \\((t_0, \hdots, t_{i-1}, \epsilon,
        t_{i+1}, \hdots, t_{n-1}), \langle \Steal_i():\epsilon \rangle).$}
\end{enumerate}
\end{definition}

Observe that the second and third items in Definition~\ref{def-ws-nc-mult}
correspond to non-deterministic transitions in which a \Take/\Steal
operation can extract any task of $\{x_1, \hdots, x_j\}$;
the value returned can be $\epsilon$ when the shortest string in the state
($x_j \cdot q'$ in the definition) is $\epsilon$.
Furthermore, the definition guarantees that every task is extracted
at least once because every \Take/\Steal operation can only return a task
that is not ``beyond'' the first task in the shortest string of the state.

The specification work-stealing with concurrent multiplicity
is nearly trivial,
with solutions requiring almost no synchronization.
A simple solution is obtained by replacing $Head$ in \WFWSM
with one local persistent variable $head$ per process
(each initialized to 1); \Put remains the same and \Take and \Steal instead
of reading from $Head$, they just locally read the current value of $head$
and increment it whenever a task is taken. It is not hard to verify
that the resulting algorithm is indeed linearizable.

To avoid this kind of simple solutions (which would very inefficient in
practice as every process is processing every task),
we restrict our attention to sequentially-exact algorithms,
namely, every sequential execution of the algorithm is a sequential execution
of the specification of (FIFO) work-stealing.\footnote{Alternatively,
work-stealing with weak multiplicity can be specified using
the interval-linearizability formalism in~\cite{CRR18}, which allows us
to specify that a \Take/\Steal operation can exhibit a non-exact behavior
only in presence of concurrency. Interval-linarizability would directly
imply that any interval-linearizable solution provides a exact solution
in sequential executions.
Roughly speaking, in interval-linearizability,
operations are linearized at intervals that can overlap each other.}
It is easy to see that the algorithm described above does not have this
property.

Finally, we stress that, differently from work-stealing with multiplicity,
two distinct non-concurrent
\Take/\Steal can extract the same task in an execution, which can happens
only if \emph{some} operations are concurrent in the execution,
due to the sequentially-exact requirement.
Particularly, in our algorithms, this relaxed behaviour can occurs
when processes are concurrently updating the head of the queue.

\subsection{Fully \R/\W Fence-Free Work-Stealing with Multiplicity}
\label{sec-ws-mult-read-write}

In this subsection we present \NCWSM, a fully \R/\W fence-free
algorithm for work-stealing with weak multiplicity.
The algorithm is obtained by replacing the atomic \MaxReg object in
\WFWSM, $Head$, with an atomic \RangeMaxReg object, a relaxation of \MaxReg
with a \RMaxR operation that returns a value in a \emph{range} of values
that have been written in the register; the range always includes
the maximum value written so far.

We present a \RangeMaxReg algorithm
that is nearly trivial, however, it allows us to solve
work-stealing with weak multiplicity in an efficient manner,
with implementations exhibiting good performance
in practice, as we will see in Section~\ref{sec-experiments}.
As above, to avoid trivial solutions,
we focus on sequentially-exact linearizable algorithms for \RangeMaxReg
with each of its sequential executions being a sequential
execution of \MaxReg.~\footnote{Again, this can be alternatively
specified through interval-lineiarizability.}
When \NCWSM is combined with this algorithm, it becomes
fully \R/\W, fence-free, linearizable, sequentially-exact and
wait-free with constant step complexity.

Intuitively, in \RangeMaxReg, each process has a \emph{private}
\MaxReg and whenever it invokes \RMaxR, the result lies
in the range defined by the value of its private \MaxReg and
the maximum among the values of the private {\sf MaxRegisters} in the state.
In the sequential specification of \RangeMaxReg, each invocation/response
of process $p_i$ is subscripted with its index $i$.

\begin{definition}[\RangeMaxReg]
  \label{def-range-max-reg} The set of states $Q$ is the infinite set of
  $n$-vectors with natural numbers, with vector $(1, \hdots, 1)$ being
  the initial state. The transitions are the following:
  \begin{enumerate}

    \item $\forall (r_0, \hdots, r_{n-1}) \in Q$ and $i \in \{0, \hdots, n-1\}$,
      if $x > r_i$,\\
      $\delta((r_0, \hdots, r_{n-1}),\ \RMaxW_i(x))
      = ((r_0, \hdots, r_{i-1}, x, r_{i+1}, \hdots, r_{n-1}), \langle
      \RMaxW_i(x): \true \rangle),$\\
    otherwise\\
    $\delta((r_0, \hdots, r_{n-1}), \RMaxW(x)) =
    ((r_0, \hdots, r_{n-1}), \langle \RMaxW(x): \true \rangle).$

    \item $\forall (r_0, \hdots, r_{n-1}) \in Q$ and $i \in \{0,
      \hdots, n-1\}$,
      $\delta((r_0, \hdots, r_{n-1}), \RMaxR_i()) = \\
        \{((r_0, \hdots, r_{i-1}, x, r_{i+1}, \hdots, r_{n-1}),
        \langle \RMaxR_i(): x \rangle)\},$
        where $x \in \{r_i, r_i+1, \hdots, \max(r_0, \hdots, r_{n-1})\}$.
\end{enumerate}
\end{definition}

As already mentioned, \NCWSM is the algorithm obtained by replacing
$Head$ in \WFWSM with an atomic \RangeMaxReg object
initialized to 1 (hence \MaxR and \MaxW are replaced by \RMaxR and \RMaxW,
respectively).

\begin{theorem}
\label{theo-wf-nc}
Algorithm \NCWSM is a linearizable wait-free
fence-free algorithm for work-stealing with weak multiplicity
using atomic \R/\W objects and a single atomic \RangeMaxReg object.
Moreover, all operations have constant step complexity
and \Put is fully \R/\W.
\end{theorem}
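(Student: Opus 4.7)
The proof follows the structure of the proof of Theorem~\ref{theo-wf}. I would first observe that wait-freedom, constant step complexity, fence-freedom, and \Put being fully \R/\W carry over verbatim, since \NCWSM differs from \WFWSM only by replacing the atomic \MaxReg calls with atomic \RangeMaxReg calls; these replacements change neither the number of steps per operation nor the absence of ordering constraints beyond data dependence. The $Tasks$-array invariant used in the proof of Theorem~\ref{theo-wf}---that in any reachable state $Tasks[\max_k Head[k], \max_k Head[k]{+}1, \hdots]$ starts with a (possibly empty) run of tasks followed by at least one $\bot$---still holds, because \Put is unchanged and, by the \RangeMaxReg semantics of Definition~\ref{def-range-max-reg}, each private register $r_i$ of $Head$ is monotonically non-decreasing. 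This already guarantees that every \Steal reads a value the owner previously wrote into $Tasks$.

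For the linearizability argument, fix a finite execution $E$ and, by wait-freedom, assume without loss of generality that all its operations are complete. The plan is to build $\Lin(E)$ by linearizing every \Put at its $Tasks[tail].\W(x)$ step, every \Take/\Steal returning a task at its $Head.\RMaxW$ step, and every \Take/\Steal returning \epty at its $Head.\RMaxR$ step. Each such point lies between the invocation and response of its operation, so $\Lin(E)$ respects the real-time partial order $<_E$. It then remains to verify that $\Lin(E)$ is a valid sequential execution of work-stealing with weak multiplicity.

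To that end, I would associate with each linearization point a weak-multiplicity state $(t_0, \hdots, t_{n-1})$ defined from the algorithm state: $t_i$ is the sequence $Tasks[Head[i], \hdots, tail]$, taken to be $\epsilon$ when $Head[i] > tail$. These strings are all suffixes of $Tasks[1, \hdots, tail]$, ordered by the values of $Head[i]$, so the ``any two strings form a suffix pair'' invariant of Definition~\ref{def-ws-nc-mult} is automatic. The \Put transition (appending $x$ to every $t_i$) matches the writing of $x$ at position $tail{+}1$. For \Take/\Steal returning task $x = Tasks[head]$, the \RangeMaxReg semantics force $head \in [Head[i], \max_k Head[k]]$, which is exactly the range of extraction positions permitted by items~2 and~3 of Definition~\ref{def-ws-nc-mult}; the subsequent \RMaxW$(head{+}1)$ then implements the prescribed update $\widehat t_i = x_{k+1}\cdots x_j\cdot q$.

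The main obstacle, as I see it, is the \epty case. Here I would exploit the side effect of \RMaxR, which raises the caller's private register $r_i$ to the returned value, so that immediately after the \RMaxR step either $r_i > tail$ (for \Take, since the guard $head > tail$ fails) or $Tasks[r_i] = \bot$ (for \Steal); by the $Tasks$-array invariant both conditions imply that the derived $t_i$ at the linearization point is already $\epsilon$. This matches items~4 and~5 of Definition~\ref{def-ws-nc-mult}, so returning \epty is consistent, completing the verification that $\Lin(E)$ is a valid sequential execution of work-stealing with weak multiplicity.
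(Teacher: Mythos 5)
Your proposal is correct and follows essentially the same route as the paper's proof: the easy properties (wait-freedom, constant step complexity, fence-freedom, \Put being fully \R/\W) carry over from \WFWSM, and linearizability is established by reading the abstract state $(t_0, \hdots, t_{n-1})$ off the private registers of the \RangeMaxReg object $Head$, with \Put linearized at its write to $Tasks$. The only, and immaterial, divergence is that the paper linearizes every \Take/\Steal --- successful or empty --- at its $Head.\RMaxR()$ step, letting the outcome of \RMaxR resolve the specification's non-deterministic choice, whereas you place successful extractions at the later \RMaxW step; both placements lie within the operation's interval and yield a valid sequential execution of Definition~\ref{def-ws-nc-mult}.
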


\begin{proof} Clearly, all operations of the algorithm are
wait-free since each of them requires a constant number of steps to
terminate, hence having constant step complexity,
and $Head$ is wait-free, by assumption.
Note that \Put uses only \R/\W.
The algorithm does not require any specific
ordering among its steps, beyond what is implied by data dependence,
therefore it is fully fence-free.

As in the proof of Theorem~\ref{theo-wf}, it can be argued
that thieves read values from $Tasks$ that have been written
by the owner.

Consider any finite execution $E$ of the algorithm.
Since the algorithm is
wait-free, there is a finite extension of $E$ in which all its
operations are completed and no new operations starts. Thus, we can
assume that there are no pending operations in $E$.

Proving that $E$ is linearizable is quite straightforward. It is
enough to observe that, at any step of $E$, the state of the object
is encoded in the state of $Head$. Let $(r_0, \hdots, r_{n-1})$
be the state of $Head$ at a given step of $E$. Then,
the state of the object is $(t_0, \hdots, t_{n-1})$ with
each $t_i$ being the finite sequence of tasks in the range
$Tasks[r_i, r_{i+1}, \hdots]$ (i.e. the sequence of non-$\bot$ values
written by the owner, in index-ascending order).
Thus, in a linearization of $E$, a $\Put(x)$ operation is linearized at
its step $Tasks[tail].\W(x)$ in Line~\ref{A02},
while a \Take/\Steal operation is linearized at its
step $Head.\RMaxR()$ in Line~\ref{A04}/Line~\ref{A12}
(observe that the non-deterministic choice in a transition with a
\Take/\Steal operation is resolved with the outcome of \RMaxR).
Therefore, we conclude that every execution of \NCWSM is linearizable,
and thus the algorithm is linearizable too.
\end{proof}


\begin{figure}[ht] \centering{ \fbox{
\begin{minipage}[t]{150mm} \small
\renewcommand{\baselinestretch}{2.5} \resetline
\begin{tabbing} aaaa\=aa\=aa\=aa\=aa\=aa\=aa\=\kill 

{\bf Shared Variables:}\\
\> $R:$ atomic \R/\W object initialized to 1\\ \\

{\bf Persistent Local Variables of a Process:}\\
\> $r \leftarrow 1$\\ \\

{\bf Operation} $\RMaxW(x)$: \\
\line{C01} \> $r \leftarrow \max\{r, R.\R()\}$ \\
\line{C02} \> {\bf if $x > r$ then} \\
\line{C03} \> \> $\{r \leftarrow x, R.\W(x)\}$\\
\line{C04} \> {\bf end if}\\
\line{C05} \> {\bf return } {\sf true}\\
{\bf end} \RMaxW \\ \\

{\bf Operation} $\RMaxR()$: \\
\line{C06} \> $r \leftarrow \max\{r, R.\R()\}$\\
\line{C07} \> {\bf return} $r$\\
{\bf end} \RMaxR

\end{tabbing}
\end{minipage} }
\caption{A linearizable wait-free algorithm for \RangeMaxReg.}
\label{figure-algo-range-max-reg}}
\end{figure}


Figure~\ref{figure-algo-range-max-reg} contains a simple linearizable
sequentially-exact wait-free algorithm for \RangeMaxReg.
All processes share a single \R/\W object~$R$
and each process has a local persistent variable~$r$.
The idea is very simple: each process locally stores in $r$
the maximum value it is aware of;
whenever it discover a new largest value in \RMaxW,
it writes it in $r$ and $R$ and since $R$ might not have the largest value,
it returns the maximum among $r$ and $R$ in \RMaxR.

\begin{theorem}
\label{theo-range-max-reg}
The algorithm in Figure~\ref{figure-algo-range-max-reg} is a linearizable
sequen-tially-exact wait-free fence-free algorithm for
\RangeMaxReg using only atomic \R/\W objects and with
constant step complexity in all its operations.
\end{theorem}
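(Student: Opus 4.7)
The plan is to establish the four ``easy'' algorithmic properties (wait-freedom, constant step complexity, use of only \R/\W, fence-freedom) by direct code inspection, to handle sequential-exactness by a short induction on a sequential execution, and to devote the bulk of the proof to a linearizability argument built around explicit linearization points and two invariants relating each process's persistent variable $r$ to the abstract state vector $(r_0,\ldots,r_{n-1})$ of \RangeMaxReg.

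For the easy part, every operation has straight-line code on the single shared register $R$ using only \R and \W, which immediately yields wait-freedom, $O(1)$ step complexity and the fully \R/\W property; fence-freedom follows because the only pair of instructions whose order might be constrained is the local assignment to $r$ and the write to $R$ in Line~\ref{C03}, which the code explicitly bundles as an unordered set. For sequential-exactness, an induction on the length of a sequential execution shows that after every operation the shared $R$ equals the largest value ever passed to a \RMaxW (either the current \RMaxW$(x)$ satisfies $x > r$ and Line~\ref{C03} writes $x$, or $R$ already holds the max), so every \RMaxR returns that maximum through Line~\ref{C07}, matching \MaxReg exactly.

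For linearizability, I would linearize every \RMaxR at its read of $R$ in Line~\ref{C06}, every \RMaxW$(x)$ that executes Line~\ref{C03} at that write, and every \RMaxW$(x)$ that skips Line~\ref{C03} at its read of $R$ in Line~\ref{C01}. These points all lie between invocation and response, so real-time order is preserved. I would then process linearization points in order and maintain the abstract state $(s_0,\ldots,s_{n-1})$ according to Definition~\ref{def-range-max-reg}. Since every \RMaxW returns \true unconditionally in both the algorithm and the spec, the only response that needs justification is that of \RMaxR$_i$, which must return a value in $[s_i,\max_j s_j]$ at its linearization point.

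The crux, and the step I expect to be the main obstacle, is maintaining two monotonicity invariants by induction on the execution prefix processed: (A)~for every process $p_i$, $s_i$ is bounded above by the current value of $p_i$'s persistent $r$; and (B)~for every $p_i$, $p_i$'s persistent $r$ is bounded above by $\max_j s_j$. Invariant~(A) is preserved by a case analysis over the linearization events that can change $s_i$: a \RMaxR linearization sets $s_i$ to the freshly updated local $r$; a successful \RMaxW sets both $s_i$ and local $r$ to $x$; an unsuccessful \RMaxW linearized at Line~\ref{C01} either leaves $s_i$ alone or raises it to $x$, and in the latter case the failed check in Line~\ref{C02} guarantees $x \le r$. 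Invariant~(B) is the more delicate half and relies on a witness argument: any value $v$ that local $r$ absorbs via a read of $R$ was previously written at some successful \RMaxW$_j$'s Line~\ref{C03}, whose linearization there set $s_j$ to $v$; since each $s_j$ is monotonically non-decreasing in the abstract spec, $\max_j s_j \ge s_j \ge v$ at the time of the read. Together (A)~and~(B) yield $s_i \le r \le \max_j s_j$ at the moment of every \RMaxR, making its returned value admissible in the abstract spec, and the linearization is complete.
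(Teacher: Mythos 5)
Your proposal is correct and follows essentially the same route as the paper's proof: the same linearization points (a successful \RMaxW at its write to $R$, a \RMaxR at its read of $R$, an unsuccessful \RMaxW at a point where it reads/holds its local $r$), the same identification of the abstract state with the vector of local $r$ values, and the same induction for sequential-exactness. Your invariants (A) and (B) and the witness argument for why the value in $R$ never exceeds $\max_j s_j$ simply make explicit what the paper asserts in one line, so your write-up is a more detailed rendering of the same argument rather than a different one.
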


\begin{proof}
It is clear form its pseudocode that the algorithm is
fully \R/\W, wait-free and fence-free,
and each operations has constant step complexity.

Consider any finite execution $E$ of the algorithm.
Since the algorithm is
wait-free, there is a finite extension of $E$ in which all its
operations are completed and no new operations starts. Thus, we can
assume that there are no pending operations in $E$.

To prove linearizability, it is enough to observe
that at any step of $E$, the state of the object
is $(r_0, \hdots, r_{n-1})$, where $r_i$ is the value stored in the
local persistent variable $r$ of process $p_i$ at that moment.
Thus, a $\RMaxW(x)$ operation with $x > r$ (hence the conditional
in Line~\ref{B01} is {\sf true}) is linearized at its step
$R.\W(x)$ in Line~\ref{B02}; if $x \leq r$, the operation is linearized
at is invocation.
A $\RMaxR()$ operation is linearizaed at its step $R.\R()$ in Line~\ref{B05};
note that the operation returns a value between the value in $r$
before Line~\ref{B05}
and the maximum among the $r$'s local variables, since that is the maximum
value $R$ can store at that time. Thus the algorithm is linearizable.

Suppose now that $E$ is sequential. By induction of the number of operations,
it is easy to show that $R$ contains the maximum value, at all time.
Thus, $E$ is indeed a sequential execution of \MaxReg,
and therefore the algorithm is sequentially-exact. The theorem follows.
\end{proof}

We now are able to present our main result:

\begin{theorem}
\label{theo-wf-fully} If $Head$ is an instance of the
algorithm in Figure~\ref{figure-algo-range-max-reg},
\NCWSM is fully \R/\W, fence-free, wait-free, sequentially-exact
and linearizable with constant step complexity in all its operations.
\end{theorem}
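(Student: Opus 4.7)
The plan is to derive the theorem by composing Theorem~\ref{theo-wf-nc} with Theorem~\ref{theo-range-max-reg}, along the same lines as the derivation of Theorem~\ref{theo-wf-log} from Theorem~\ref{theo-wf}. Almost every clause of the statement---linearizability, wait-freedom, fence-freedom, fully \R/\W access, and constant step complexity---will fall out directly from the two theorems together with the composability of linearizability~\cite{HW90}; the only clause that needs a standalone argument is sequential-exactness, which is where I expect the real work to lie.

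First I would invoke Theorem~\ref{theo-wf-nc} to obtain that \NCWSM is linearizable, wait-free, fence-free, has every operation in a constant number of steps, and uses only \R/\W primitives outside of the base object $Head$. I would then invoke Theorem~\ref{theo-range-max-reg} to learn that the particular implementation of $Head$ plugged in is itself linearizable, wait-free, fence-free, fully \R/\W, and has constant step complexity for both \RMaxR and \RMaxW. Composability of linearizability then yields linearizability of the composed object. Because the base object introduces no \RMW instructions, \Put (which never touches $Head$) together with \Take and \Steal (whose only non-\R/\W calls were to $Head$) makes the whole algorithm fully \R/\W. Since neither \NCWSM's own code nor the base algorithm enforces any ordering beyond what data dependence already forces, the composition is still fence-free, and the composition of two constant-step-complexity layers remains constant.

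The hard part will be sequential-exactness: showing that every sequential execution of the composed algorithm is a sequential execution of (FIFO) work-stealing, and not merely of its weak-multiplicity relaxation. Let $E$ be a sequential execution of the composed algorithm. Because no two operations of \NCWSM overlap in $E$, the operations they induce on $Head$ do not overlap either, so the sub-execution restricted to $Head$ is sequential; by the sequential-exactness clause of Theorem~\ref{theo-range-max-reg}, this sub-execution is then a sequential execution of \MaxReg. In particular, every $Head.\RMaxR()$ call in $E$ returns the exact maximum value ever written to $Head$, which, by the invariants already established in the proof of Theorem~\ref{theo-wf-nc}, coincides with the true current head of the queue. Plugging this into the code of \NCWSM, each \Take/\Steal reads the least-indexed non-$\bot$ entry of $Tasks$, returns it, and advances $Head$ by one; hence no two non-concurrent operations can ever read the same head value and thus no task is extracted twice, while \epty is returned exactly when the queue is empty. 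This makes $E$ a sequential execution of (FIFO) work-stealing, which is precisely the sequential-exactness requirement and completes the plan.
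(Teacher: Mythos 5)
Your proposal is correct and follows essentially the same route as the paper: compose Theorem~\ref{theo-wf-nc} with Theorem~\ref{theo-range-max-reg}, use composability of linearizability for the linearizability clause, observe that the remaining structural properties (fully \R/\W, fence-free, wait-free, constant step complexity) are inherited from the two layers, and treat sequential-exactness as the one clause needing its own argument. The only difference is cosmetic: for sequential-exactness the paper notes that a sequential execution of the composed algorithm, in which $Head$ behaves as an exact \MaxReg, is a sequential execution of \WFWSM and then appeals to Remark~\ref{remark-seq-exact-set-lin} and Theorem~\ref{theo-wf}, whereas you re-verify the exact FIFO behavior directly from the code; both are valid, with the paper's version being slightly more economical and yours more self-contained.
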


\begin{proof} By composability of linearizability~\cite{HW90},
the algorithm remains linearizable.
The algorithm is fully \R/\W and wait-free because \Put uses only \R/\WFWSM
and the \RangeMaxReg algorithm in Figure~\ref{figure-algo-range-max-reg}
is fully \R/\W and wait-free, by Theorem~\ref{theo-range-max-reg}.
Clearly, the step complexity of \Put is $O(1)$.
The step complexity of \Take and \Steal is $O(1)$ too
because the \RangeMaxReg algorithm in Figure~\ref{figure-algo-range-max-reg}
has constant step complexity.
The resulting algorithm does not require any specific
ordering among its steps, beyond what is implied by data dependence,
therefore it is fully fence-free.
Consider a sequential execution of the algorithm.
By Theorem~\ref{theo-range-max-reg}, the algorithm
in Figure~\ref{figure-algo-range-max-reg} behaves like a \MaxReg,
and hence the execution corresponds to a sequential execution of \WFWSM
(exchanging \RMaxR and \RMaxW with \MaxR and \MaxW, respectively),
which in turn is a sequential execution of work-stealing,
since \WFWSM is sequentially-exact, by Remark~\ref{remark-seq-exact-set-lin}
and Theorem~\ref{theo-wf}.
Thus, the algorithm is sequentially exact. The theorem follows.
\end{proof}

\begin{figure}[ht] \centering{ \fbox{
\begin{minipage}[t]{150mm} \small
\renewcommand{\baselinestretch}{2.5} \resetline
\begin{tabbing} aaaa\=aa\=aa\=aa\=aa\=aa\=aa\=\kill 

{\bf Shared Variables:}\\
\> $Head:$ atomic \R/\W object initialized to 1\\
\> $Tasks[1, 2, \hdots]:$ array of atomic \R/\W objects \\
\> \> \> \> \> \> with the first two objects initialized to $\bot$\\ \\

{\bf Persistent Local Variables of the Owner:}\\
\> $head \leftarrow 1$\\
\> $tail \leftarrow 0$\\ \\

{\bf Persistent Local Variables of a Thief:}\\
\> $head \leftarrow 1$\\ \\

{\bf Operation} $\Put(x)$: \\
\line{B01} \> $tail \leftarrow tail+1$\\
\line{B02} \> $\{Tasks[tail].\W(x),\ Tasks[tail + 2].\W(\bot)\}$\\
\line{B03} \> {\bf return } {\sf true}\\
{\bf end} \Put \\ \\

{\bf Operation} $\Take()$: \\
\line{B04} \> $head \leftarrow \max\{head, Head.\R()\}$\\
\line{B05} \> {\bf if $head \leq tail$ then}\\
\line{B06} \> \> $\{ x \leftarrow Tasks[head].\R(),Head.\W(head+1)\}$\\
\line{B07} \> \> $head \leftarrow head+1$\\
\line{B08} \> \> {\bf return} $x$\\
\line{B09} \> {\bf end if}\\
\line{B10} \> {\bf return} \epty\\
{\bf end} \Take \\ \\

{\bf Operation} $\Steal()$: \\
\line{B11} \> $head \leftarrow \max\{head, Head.\R()\}$\\
\line{B12} \> $x \leftarrow Tasks[head].\R()$ \\
\line{B13} \> {\bf if $x \neq \bot$ then}\\
\line{B15} \> \> $Head.\W(head+1)$\\
\line{B14} \> \> $head \leftarrow head+1$\\
\line{B16} \> \> {\bf return} $x$\\
\line{B17} \> {\bf end if}\\
\line{B18} \> {\bf return} \epty\\
{\bf end} \Steal

\end{tabbing}
\end{minipage} }
\caption{\NCWSM algorithm with the \RangeMaxReg
algorithm in Figure~\ref{figure-algo-range-max-reg} inlined.}
\label{figure-w-mult} }
\end{figure}

Figure~\ref{figure-w-mult} contains an optimized \NCWSM
algorithm with the \RangeMaxReg algorithm in
Figure~\ref{figure-algo-range-max-reg} inlined.
Since \Take and \Steal first \RMaxR from and then \RMaxW to $Tail$,
the algorithm remains sequentially-exact when removing Line~\ref{C01}
of \RMaxW in Figure~\ref{figure-algo-range-max-reg}.
Our experimental evaluation in Section~\ref{sec-experiments}
tested implementations of this algorithm.

\section{Bounding the Multiplicity}
\label{sec-bound-mult}

Here we discuss simple variants of our algorithms that bound the
number of operations that can extract the same task.  More in detail,
with the addition of a single \SWAP instruction in \Steal, the
modified algorithms guarantee that no two distinct \Steal operations
take the same task (however, a \Take and a \Steal can take the same
task).

We only discuss the case of \WFWSM as the variant for \NCWSM is the
same. The modification consists in having an array $A$ of the same
length of $Tasks$, with each entry initialized to \true. After
Line~\ref{A14}, a thief performs $A[head].\SWAP(\false)$, and it
executes Lines~\ref{A16} and~\ref{A17} only if the \SWAP successfully
takes the \true value in $A[head]$; otherwise, it increments $head$
and goes to Line~\ref{A12} to start over.  Therefore, \Steal is only
nonblocking in the modified algorithm. This \emph{bounded} variant of
\WFWSM is denoted \BNBWSM.

The set-linearization proof of \BNBWSM is almost the same,
with the difference that for every task that is stored in $Tasks[r]$
and is returned by one or two \Take/\Steal operations, the operations
are set-linearized in the same concurrency class placed at the first
step $e$ in $E$ that executes $Head.\MaxW(head+1)$, in case of \Take
(as in the proof of Theorem~\ref{theo-wf}),
or $A[r].\SWAP(\top)$, in case of \Steal.

\paragraph*{Removing Multiplicity}
The \Take operation of \WFWSM can be modified similarly to obtain
solutions for exact (FIFO) work-stealing.  The modification consists
in using the array $A$ mentioned above. After Line~\ref{A05}, the
owner performs $A[head].\SWAP(\false)$, and it executes
Lines~\ref{A06} and~\ref{A09} only if the \SWAP successfully takes the
\true value in $A[head]$; otherwise, it increments $head$ and goes to
Line~\ref{A04} to start over.

\section{Implementing Arrays of Infinite Length}
\label{sec-removing-infinite-arrays}

So far we have presented our algorithms using an infinite length array
where the tasks are stored. In this section we discuss two approaches
to implement our algorithms using arrays of finite length;
both approaches have been used in previous algorithms
(e.g.~\cite{AF20, AKY10, MlVS09, HLMS06, YM16}).
We only discuss the case of \WFWSM as the other cases are
handled in the same way.
The two approaches are the following:

\begin{enumerate}
  \item\label{desc-item1}
  In the first approach, the algorithm starts with $Tasks$ pointing to
  an array of finite fixed length, with its two first objects
  initialized to $\bot$; each time the owner detects the array is full
  (i.e. when $tail$ is larger the length $Tasks$), in the middle of a
  \Put operation, it creates a new array $A$, duplicating the previous
  length, copies the previous content to $A$, initializes the next
  two objects to $\bot$, points $Tasks$ to~$A$ and finally continues
  executing the algorithm.
  Although the modified \Put operation remains wait-free,
  its step complexity is unbounded.

  \item\label{desc-item2}
  In the second approach, $Tasks$ is implemented with a
  linked list with each node having a fixed length array.
  Initially, $Tasks$ consists of a single node, with the first two
  objects of its array initialized to $\bot$.
  When the owner detects that all entries in the linked list have been
  used, in the middle of a \Put operation, it creates a new node,
  initializes the first two objects to $\bot$, links the new node to
  the end of the list and continues executing the algorithm.
  An index of $Tasks$ is now made of a tuple:
  a pointer to a node of the linked list and an node-index array.
  Thus, any pair of nodes can be easy compared (first pointer nodes,
  then node-indexes) and incrementing an index can be easily
  performed too (if the node-index is the last one,
  the pointer moves forward and the node-index is set to one,
  otherwise only the node-index is incremented).
  The modified \Put operation remains wait-free with
  constant step complexity.
\end{enumerate}

Our experimental evaluation in Section~\ref{sec-experiments} shows
the second approach performs better than the first one when solving
a problem of concurrent nature.
\section{Idempotent \texorpdfstring{$\neq$}{≠} Multiplicity}
\label{sec-idem-neq-mult}

Idempotent work-stealing is (only) informally defined in~\cite{MlVS09}
as: every task is extracted \emph{at least once}, instead of
\emph{exactly once} (in some order).  Three idempotent work-stealing
algorithms are presented in~\cite{MlVS09}, inserting/extracting tasks
in FIFO and LIFO order, and as a double-ended queue (the owner puts in
and takes from one side and the thieves steal from the other side).

\begin{figure}[ht]
  \begin{center} \vspace{0.55cm}
    \includegraphics[scale=0.5]{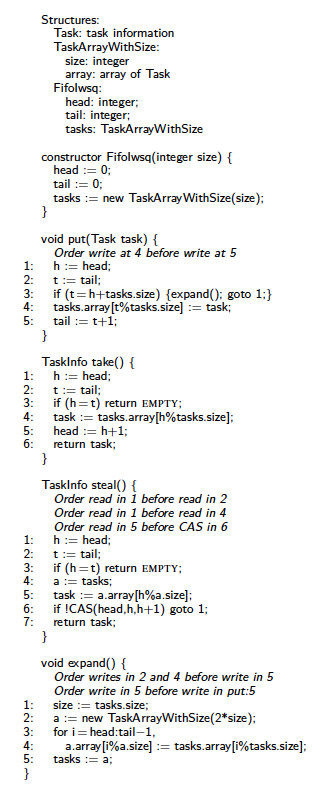}
    \caption{\small Idempotent FIFO work-stealing~\cite{MlVS09}.}
    \label{fig-idempotent-fifo}
  \end{center}
\end{figure}

Here we explain that these algorithms does not implement work-stealing
with multiplicity, neither its non-concurrent variant.  While in our
relaxations every process extracts a task at most once, and hence the
number of distinct operations that extract the same task is at most
the number of processes in the system, in idempotent work-stealing a
task can be extracted an unbounded number of times, and moreover, a
thief can extract the same task an unbounded number of times.

Figure~\ref{fig-idempotent-fifo} depicts the FIFO idempotent
work-stealing algorithm in~\cite{MlVS09}.  The algorithm stores the
tasks in a shared array $tasks$ and shared integers $head$ and $tail$
indicate the positions of the head and the tail.  For every integer
$z > 0$, we describe an execution of the algorithm in which, for every
$k \in \{1, \hdots, z\}$, there is a task that is extracted by
$\Theta(k)$ distinct operations (possibly by the same thief), with
only one of them being concurrent with the others.

\begin{enumerate}

\item Let the owner execute alone $z$ times \Put. Thus, there are $z$
distinct tasks in $tasks$

\item Let $r = z$.

\item The owner executes \Take and stops before executing Line~5, i.e.
it is about to increment $head$.

\item In some order, the thieves sequentially execute $r$ \Steal
operations; note these \Steal operations return the $r$ tasks in
$tasks[0, \hdots, r-1]$.

\item We now let the owner increment $head$. If $r > 1$, go to step 3
with $r$ decremented by one, else, end the execution.
\end{enumerate}

Observe that in the execution just described, the task in $tasks[i]$,
$i \in \{0, \hdots, z-1\}$, is extracted by a \Take operation and by
$i+1$ distinct non-concurrent \Steal operations (possible by the same
thief).  Thus, the task is extracted $\Theta(i)$ distinct times.
Since $z$ is any positive integer, we conclude that there is no bound
on the number of times a task can be extracted.

A similar argument works for the other two idempotent work-stealing
algorithms in~\cite{MlVS09}.  In the end, this happens in all
algorithms because tasks are not marked as taken in the shared array
where they are stored. Thus, when the owner takes a task and
experience a delay before updating the head/tail, all concurrent
modifications of the head/tail performed by the thieves are
overwritten once the owner completes its operation, hence leaving all
taken tasks ready to be taken again.

\section{Experiments}\label{sec-experiments}

In this section we discuss the outcome of the experiments we have
conducted to evaluate the performance of implementations of \NCWSM and
its bounded version, \BNCWSM (see Section~\ref{sec-bound-mult}).
Based on the approaches discussed in
Section~\ref{sec-removing-infinite-arrays}, we implemented two
versions of the algorithm, one using arrays and another using linked
lists.
We mainly discuss the results of the version based on linked
lists, since it exhibited better performance than the version based on
arrays.
\NCWSM and \BNCWSM were compared to the following algorithms:
THE Cilk~\cite{CL05}, Chase-Lev~\cite{FLR98}, and the three Idempotent
Work-Stealing algorithms~\cite{MlVS09}.

\subsection{Platform and  Environment}
\label{subsec-platform-and-environment}

The experiments were executed in two machines with distinct
characteristics. The first machine has an Intel Kaby Lake
processor (i7–7700HQ, with four cores where each core has two threads
of execution) and 16GB RAM. This machine was dedicated only to the
experiments. The second one is a cluster with four Intel Xeon
processors (E7–8870 v3 processor, with 18 cores where each core has
two threads of execution) and 3TB RAM. This machine was shared with
other users, where the process executions are closer to a usual
situation, namely, resources (CPU and memory) are shared by all
user. We will focus mainly on the results of the experiments in the
Core i7 processor, and additionally will use the results of the
experiments in Intel Xeon to complement our analysis.  The algorithms
were implemented in the Java platform (OpenJDK 1.8.0\_275) in order to
test them in a cross-platform computing environment.

\subsection{Methodology}\label{subsec:methodology}

To analyze the performance of the algorithms, we divided the analysis
into the next two benchmarks, which have been used
in~\cite{FLR98,MlVS09,MA14}:

\begin{itemize}
  \item \textbf{Zero cost experiments}. We test the performance
    of executing only the methods provided by the algorithms. Thus, we
    compare the time taken by each algorithm when performing the same set
    of tasks without being used as support for other algorithms.
    In our experiments, we test the time required for executing
    by \Put-\Take and \Put-\Steal operations
  \item \textbf{Irregular graph application}. We test the
    performance of solving a challenging graph problem, computing a
    spanning tree, where the goal is to speed up the computation via
    parallel exploration of the graph.
\end{itemize}

Below, we explain in detail how we implement each benchmark.

\paragraph{Zero cost experiments} We measure the time required for
performing a sequence of operations provided by the work-stealing
algorithms. We measure the time needed for \Put-\Take operations,
where the owner performs a sequence of \Put operations followed by an
equal number of \Takes.
Differently from~\cite{MlVS09, MA14}, we also measure the time for
\Put-\Steal operations.
In both experiments, The number of \Put operations is $10,000,000$,
followed by the same number of \Take or \Steal operations;
no operation performs any work associated to a task.

\paragraph{Irregular graph application}

We consider the spanning-tree problem to evaluate the performance of
each algorithm. It is measured the speed up of the computation by the
parallel exploration of the input graph. This problem is used
in~\cite{MlVS09, MA14} to evaluate their work-stealing algorithms.  We
refer the reader to~\cite{BaderC04} for a detailed description of the
algorithm.
The spanning tree algorithm already uses a form of work-stealing to
ensure load-balancing, and we adapted it to work with our
work-stealing implementations. Further, the algorithms were tested on
many types of directed and undirected graphs:

\begin{itemize}
  \item \textbf{Regular and irregular meshes}
  \begin{itemize}
    \item \textbf{2D Torus}: The vertices are on a 2D mesh, where each
      vertex has a connection to its four neighbors in the mesh.
    \item \textbf{2D60 Torus}: It is a random graph obtained from the
      previous one, where each edge has a probability of 60\% to be
      present.
    \item \textbf{3D Torus}: The vertices are on a 3D mesh, where each
      vertex has a connection to its six neighbors in the mesh.
    \item \textbf{3D40 Torus}: It is a random graph obtained from the
      previous one where each edge has a probability of 40\% to be
      present.
  \end{itemize}
  \item \textbf{Random}: Random graph of \(n\) vertices and \(m\) edges,
    by randomly adding \(m\) unique edges to the vertex set.
\end{itemize}

All graphs are represented using the adjacency lists
representation. The graphs were built with 1,000,000 of vertices for
tests in core i7 and 2,000,000 for tests in Intel Xeon.

Each experiment consists of the following: take one of the previous
graphs, specify its parameters and run the spanning-tree algorithm
with one of the of the work-stealing algorithms (THE Cilk, Chase-Lev,
Idempotent FIFO, Idempotent LIFO, Idempotent DEQUE, \NCWSM or
\BNCWSM); the spanning tree algorithm is executed five times with each
work-stealing algorithm, using the same root and the same graph.  In
each execution, it is tested the performance obtained by increasing
the number of threads from one to the total supported by the
processor, registering the duration of every execution.

For each work-stealing algorithm and number of threads, the fastest
and slowest executions are discarded, and the average of the remaining
three is calculated. All results are normalized with respect to
Chase-Lev with a single thread, as in~\cite{MlVS09, MA14}.

\paragraph{Initial length of arrays}
We did both experiments with distinct initial length of arrays
in the array-based algorithms, and distinct lengths of node-array in the
linked-list versions of our algorithms.
The lengths we tried were \(256\), \(4096\),
\(32768\), \(100000\), \(250000\), \(500000\) and \(1,000,000\).
The performance of the algorithms decreased with large initial lengths
but the relative speedup among the algorithms did not change.
Below we discuss the results of the experiments with arrays of
length 256.

\subsection{Results}\label{subsec-results}

We present a summary of the results before explaining them in detail:

\paragraph{Zero cost experiments}
\begin{enumerate}
  \item \NCWSM\@ shows a slightly better performance than Idempotent
    FIFO in the \Put-\Take experiment.  This is expected as both
    algorithms are similar in their \Put/\Take operations, with Idempotent
    FIFO using non-costly fences.  Both algorithms are the ones with the
    best performance in this experiment.  However, \NCWSM\@ exhibited
    better performance in \Put-\Steal experiment. This happens due to the
    fact that the \Steal operation of any other algorithm uses costly
    primitives, like \CAS or \SWAP.

  \item \BNCWSM\@\@ has the worst performance among all algorithms
    in the \Put-\Steal experiment, due to the management of the additional
    array for marking a task as taken. However, this result does not
    precludes the algorithm for exhibiting a competitive performance
    in the next benchmark.
\end{enumerate}

\paragraph{Irregular graph application}

\begin{enumerate}
  \item In general \NCWSM\@ has better performance than any other
    algorithms and in particular it performed better than Idempotent
    FIFO\footnote{We stress the comparison with respect to Idempotent
      FIFO, as both algorithms insert/extract tasks in the same order.} in
    virtually all cases.

  \item \BNCWSM\@ has a lower performance than \NCWSM\@ but still
    competitive respect to Idempotent algorithms.
\end{enumerate}

Below we discuss in detail the results of the zero cost experiments
and the irregular graph application, in both cases omitting the
results of THE Cilk because its performance was similar to that of
Chase-Lev.  Similarly, we omit the results of Idempotent DEQUE since
in general it had the worst performance among the Idempotent
algorithms.

\paragraph{Zero cost experiments}

Figure~\ref{fig-puts-takes} depicts the result of the \Put-\Take
experiment in the Intel Core i7 processor.  The results of THE Cilk
and the idempotent algorithms are similar than those
presented~\cite{MlVS09}.  As for \NCWSM, the time
required for \Put operations was similar than that of Idempotent LIFO,
and slightly faster than Idempotent FIFO. Considering the whole
experiment, \NCWSM was faster than any other algorithm.

The results show a significant speed-up, where the gain is about
21.9\% respect to Chase-Lev, 12.5\% respect Idempotent FIFO and 6\%
respect to Idempotent LIFO.\@

For the case of \BNCWSM, it required about twice the time of any other
algorithm.  This poor performance is due to the use of an extra
boolean array for bounding multiplicity.  Particularly, the \Put
operation writes three entries of an array (two entries of $Tasks$ and
one of the extra boolean array), differently from \NCWSM's \Put
operations that writes only two (both of $Taks$).

\begin{figure}[ht]
  \centering
  \subfloat[Subfigure 1][Results of experiment for puts and takes.] {
    \includegraphics[scale=0.3]{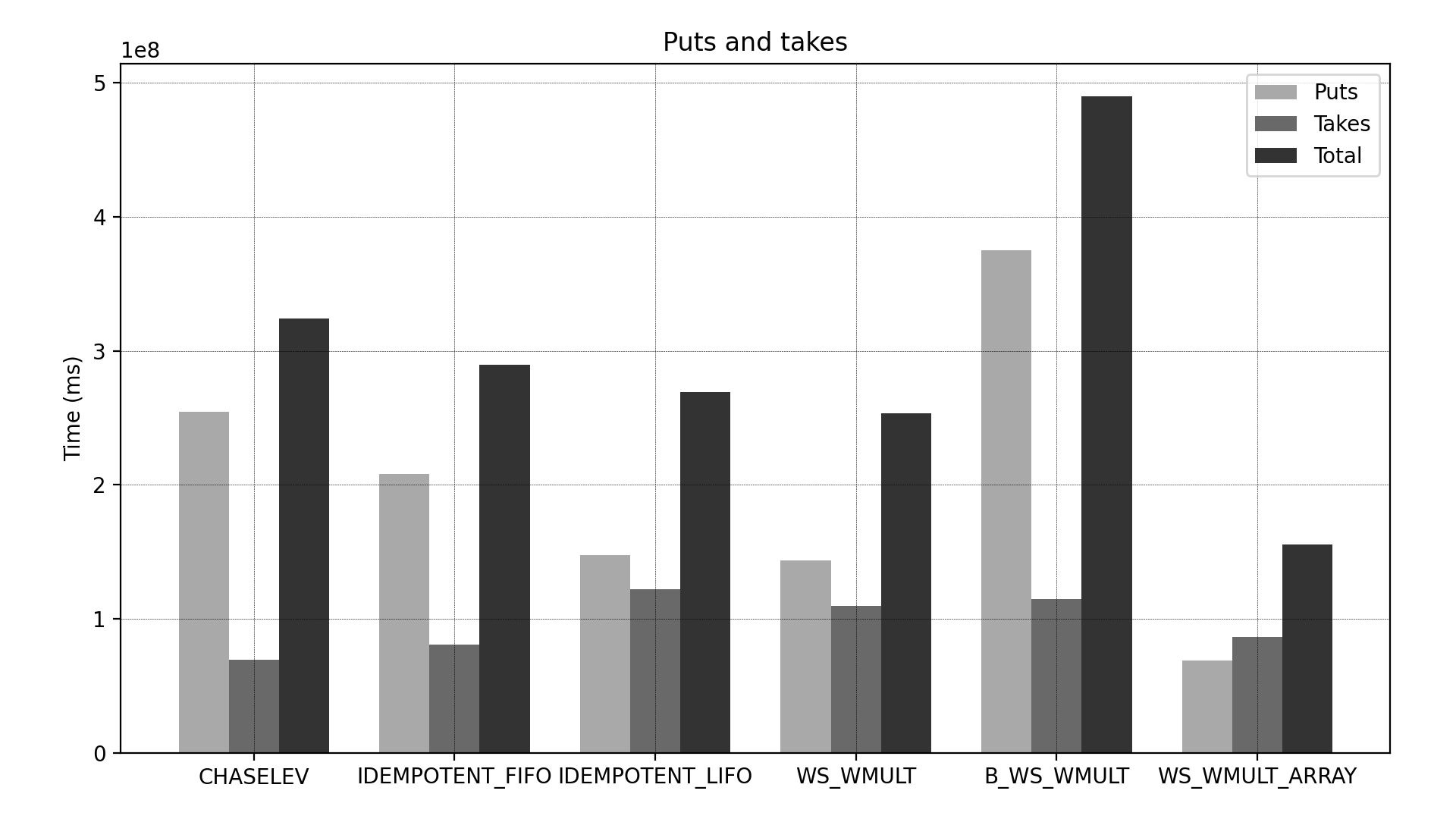}
    \label{fig-puts-takes}
  }
  \subfloat[subfigure 2][Results of experiment for puts and steals.] {
    \includegraphics[scale=0.3]{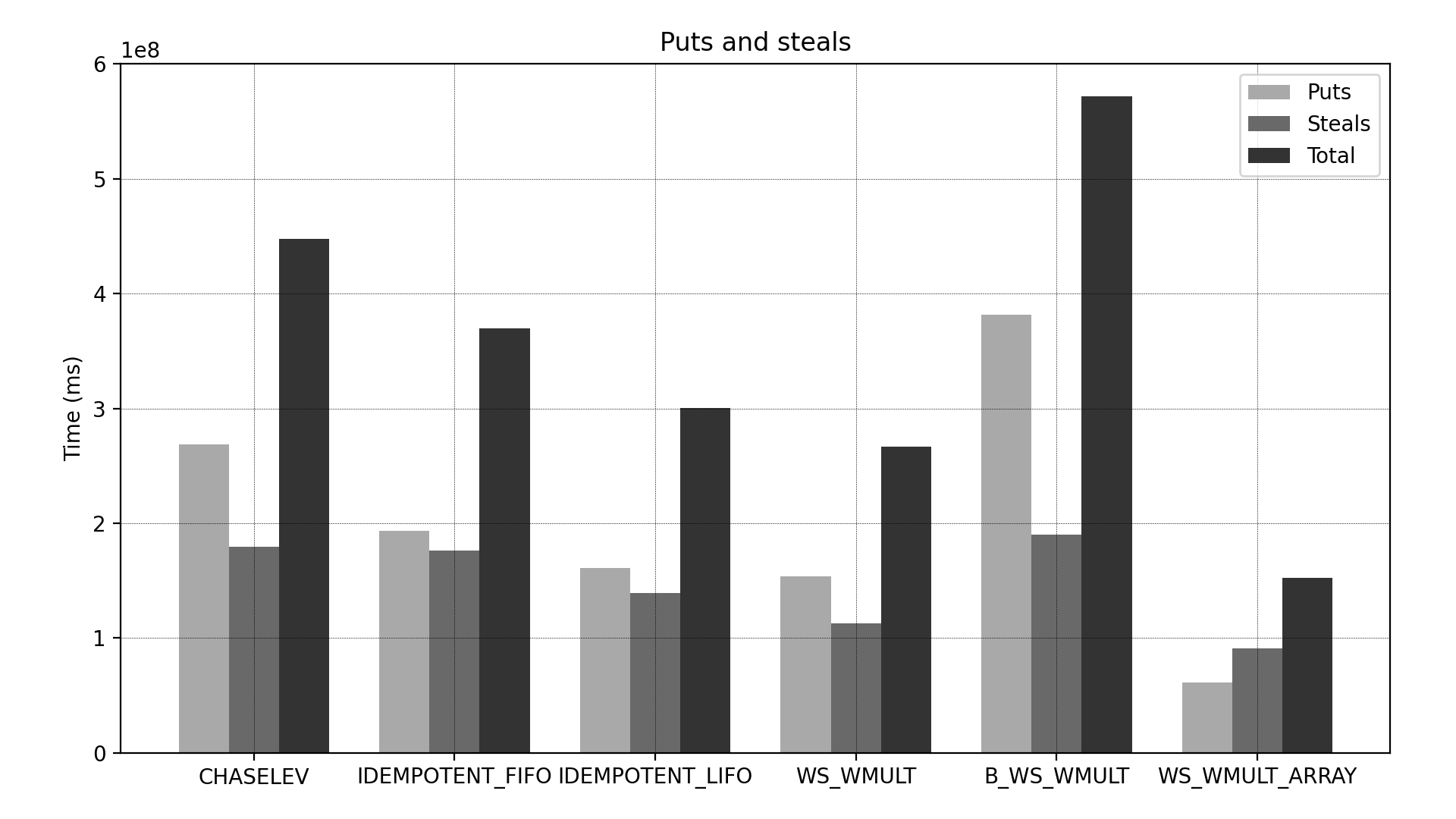}
    \label{fig-puts-steals}
  }
  \caption{Results of experiments for puts, takes and steals.}
\end{figure}

The results of the \Put-\Steal experiment in the processor Intel Core
i7 are shown in Figure~\ref{fig-puts-steals}, where, as expected, \Put
operations exhibited a similar performance as in the \Put-\Take
experiment.

For \Steal operations, we observe that Chase-Lev, Idempotent FIFO, and \NCWSM are
faster than Idempotent LIFO and \BNCWSM. In particular, \NCWSM
is the fastest among all algorithms, which is expected as it does not
use fences and \RMW instructions.  The speed-up is of 37\%, 27.8\% and
11.2\% compared to Chase-Lev, Idempotent FIFO and Idempotent LIFO,
respectively.  As for total time, we observe an speed-up of 40.4\%,
36\%, and 19\% regarding Chase-Lev, Idempotent FIFO and Idempotent
LIFO in each instance.\@ For \BNCWSM\@, we have a similar result than
the one in the \Puts-\Takes\@ experiment.

Finally, the array-based implementation of \NCWSM,
denoted {\sf WS\_WMULT\_ARRAY} the figures, performed much
better than any other algorithm, with a speed-up of up to 65.9\%,
58.7\% and 49.2\% compared to Chase-Lev, Idempotent FIFO and
Idempotent LIFO, for the \Put-\Steal experiment. Similarly, for the
\Put-\Take experiment, the speed-up reached by the array-based
implementation of \NCWSM was of 52.1\% respect to Chase-Lev, 46.3\%
compared to Idempotent FIFO and 42.3\% for Idempotent LIFO.  This
implementation performs better than the linked-list version of \NCWSM
because the latter creates a large number of arrays during the
execution.

In the next experiment, we do not discuss in detail
the result of the array-based implementations of our algorithms
as the linked-list implementations have a similar performance,
and sometimes outperforming them.

\begin{figure}[ht]
  \begin{center}
    \subfloat[Subfigure 3][Speed up for directed random graph.] {
      \includegraphics[scale=0.3]{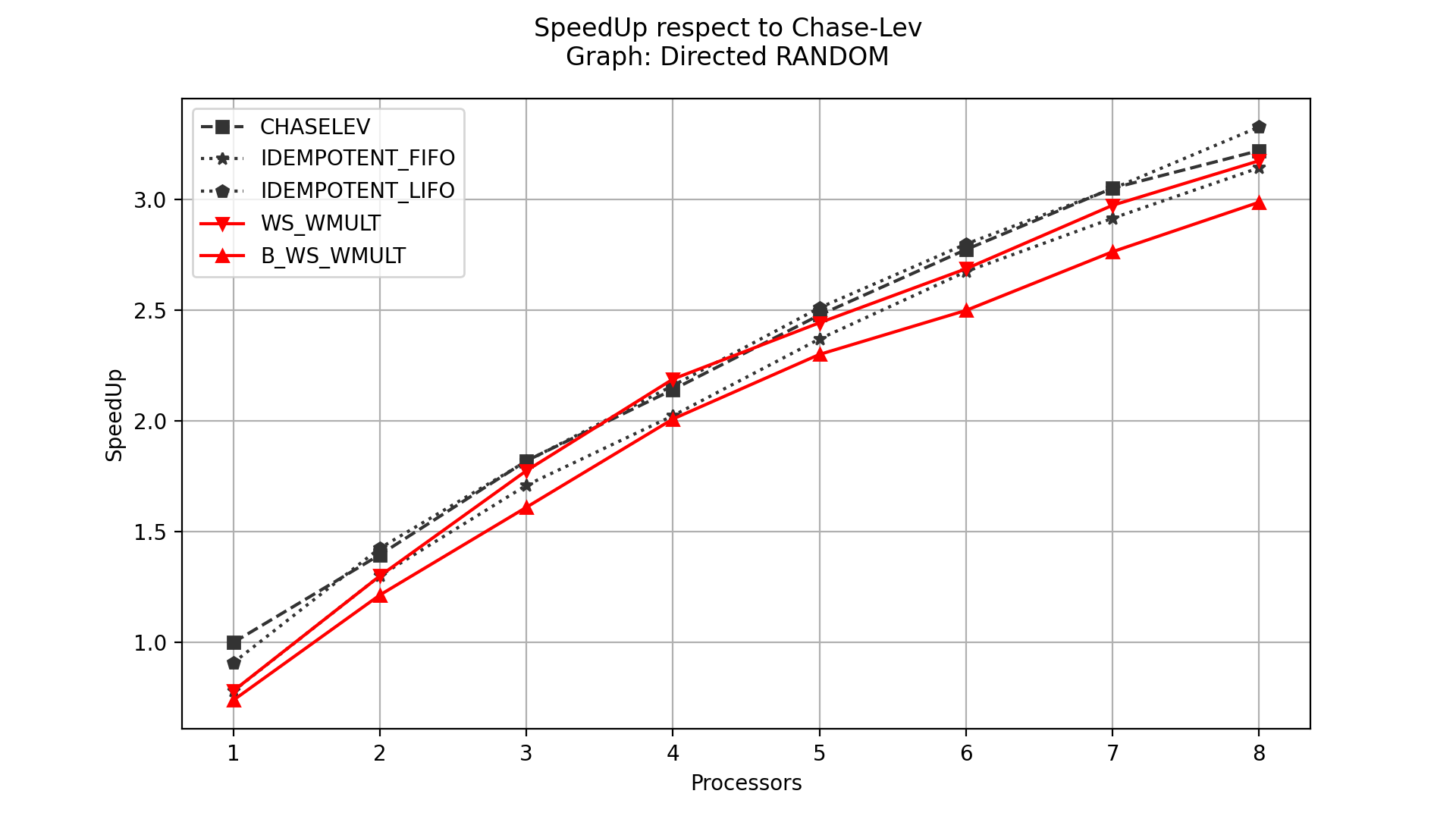}
      \label{directed-random}
    }
    \subfloat[Subfigure 4][Speed up for undirected random graph.] {
      \includegraphics[scale=0.3]{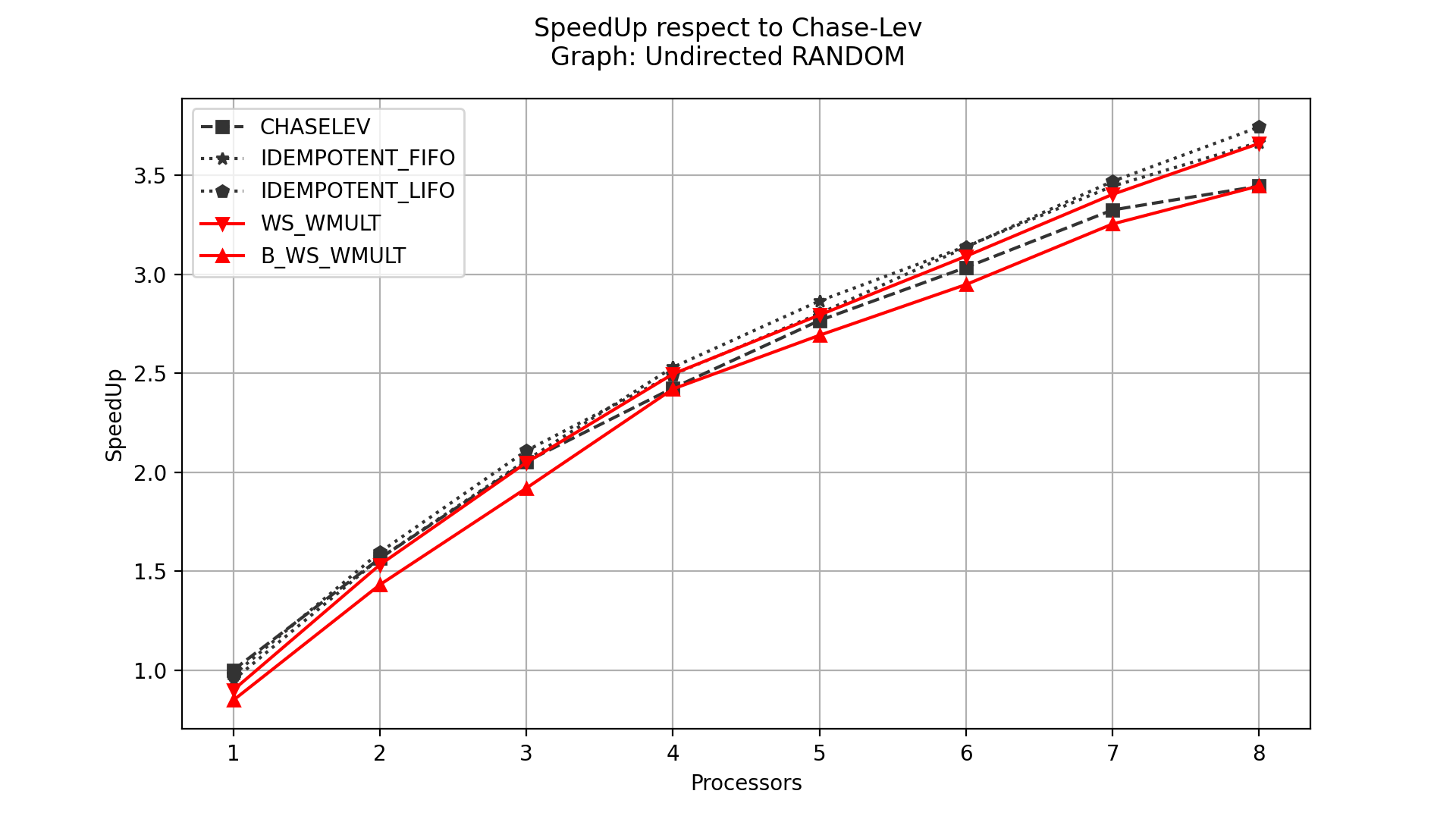}
      \label{undirected-random}
    }
    \qquad
    \subfloat[Subfigure 5][Speed up for directed random graph on Intel
    Xeon.]{
      \includegraphics[scale=0.3]{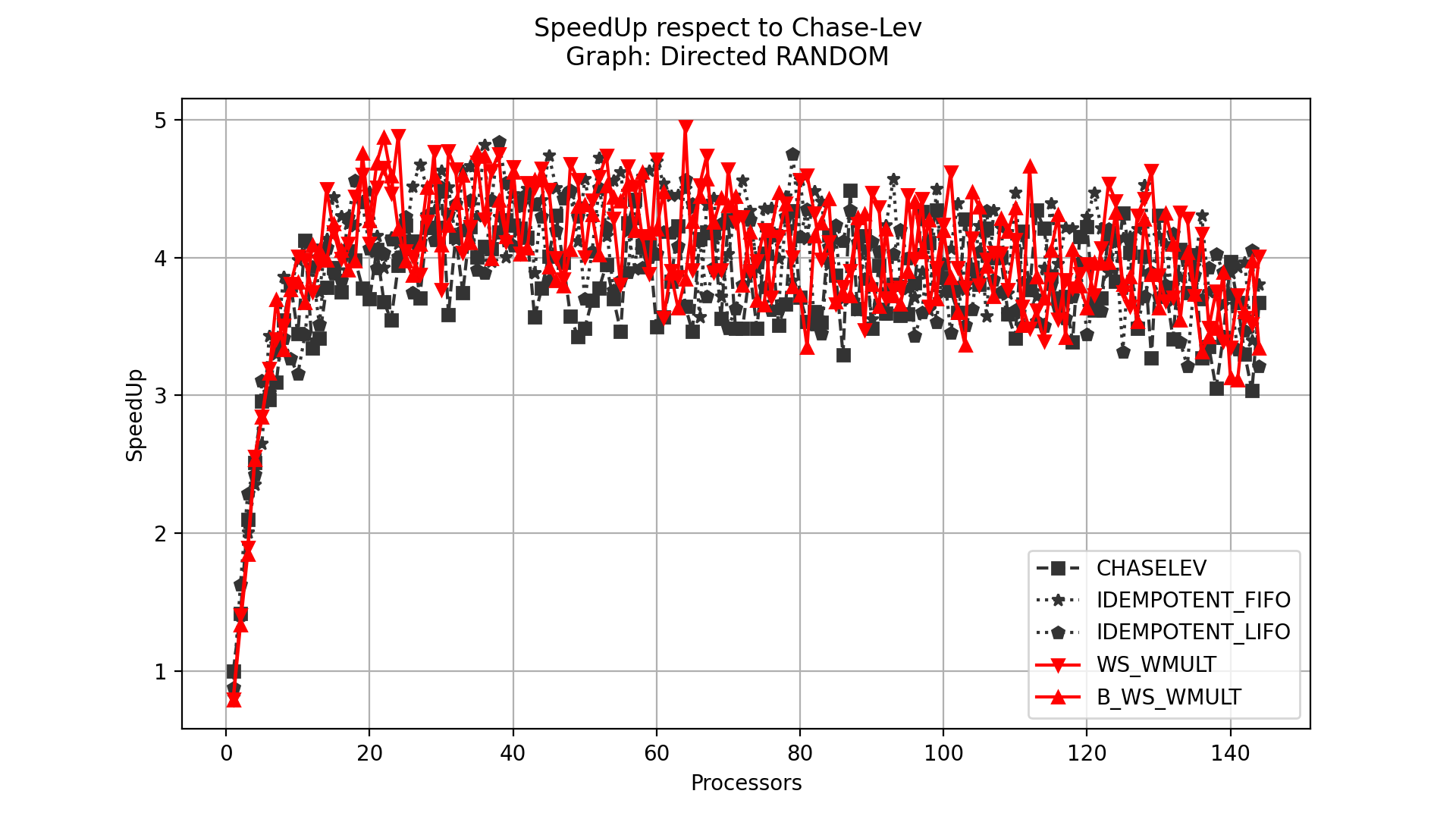}
      \label{cluster-directed-random}
    }
    \subfloat[Subfigure 6][Speed up for undirected random graph on Intel
    Xeon.] {
      \includegraphics[scale=0.3]{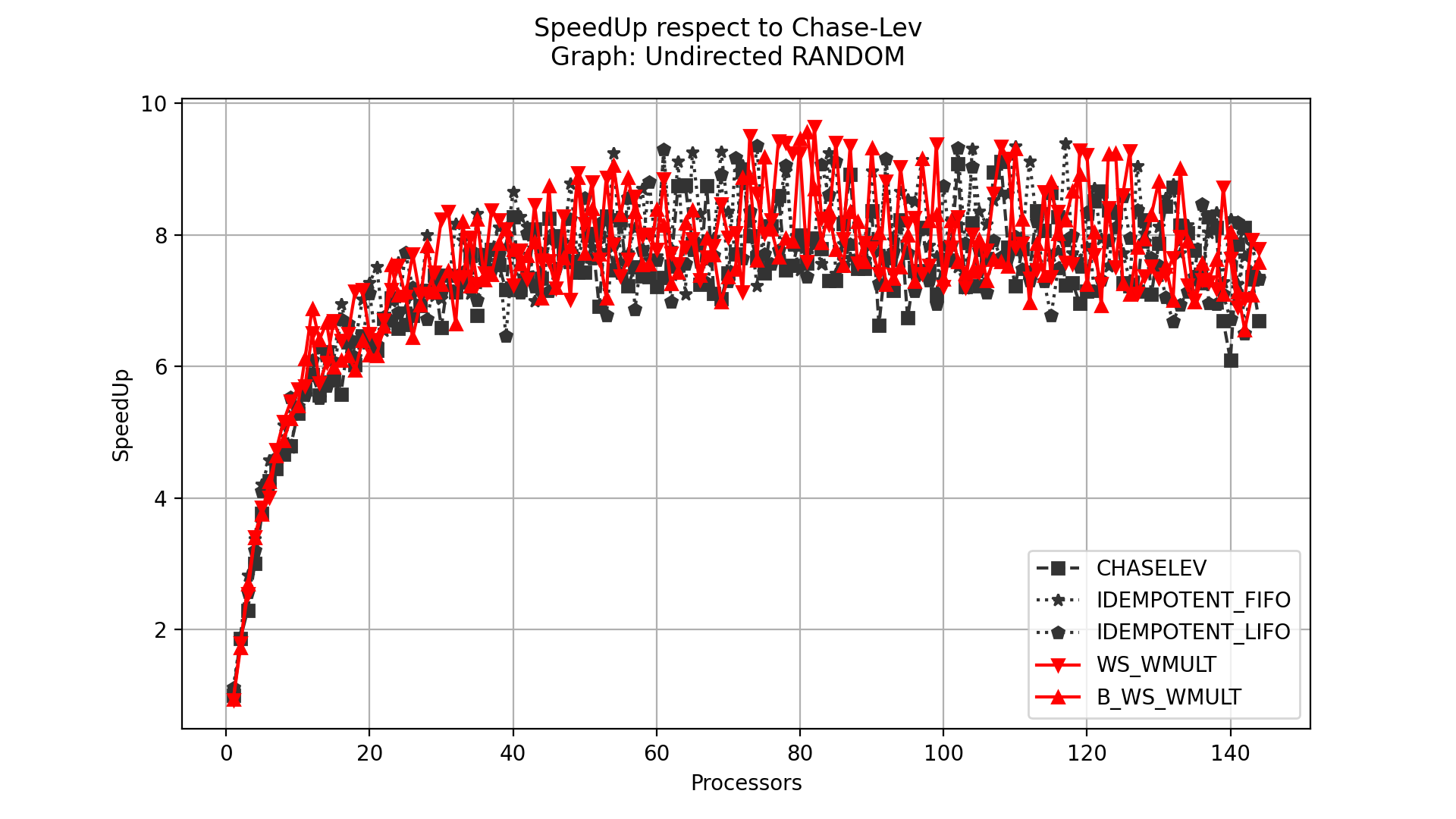}
      \label{cluster-undirected-random}
    }
  \end{center}
  \caption{Speedups for random graphs}
\end{figure}

\paragraph{Irregular graph application}
  Table~\ref{table:range-table} displays the minimum and maximum
  speed-up achieved by each algorithm, normalized with respect to the
  one-thread execution of Chase-Lev, for every graph and processor.
  Overall, \NCWSM\@ had a better performance with maximum speed-up
  of 9.64, corresponding to a gain of 5.39\%, 2.59\%, 2.9\% and 0.72\%
  respect to the maximum speed-up for Chase-Lev, Idempotent FIFO,
  Idempotent LIFO and \BNCWSM), for the undirected random graph in the
  Xeon processor. This speed-up is slightly better than the maximum
  speed-up  achieved by Idempotent LIFO, the best for this graph among
  the other algorithms. The average of the maximum speed-up of
  \NCWSM\@ is 3.92 (corresponding to a gain of 20.15\% respect to
  Chase-Lev,  where the average of the maximum speed-up was of 3.13.
  For Idempotent FIFO, Idempotent LIFO and \BNCWSM the gain was of
  2.04\%, 8.92\% and 3.31\% with an average for the maximum of the
  speed-ups of 3.84, 3.57 and 3.79 respectively. \BNCWSM\@ always
  showed a similar performance than Idempotent FIFO algorithm; we
  expected this behavior as both use expensive \RMW instructions in
  \Steal.  Below we discuss the results for each type of graph.

\paragraph{Random graphs}
  For the directed case (Figure~\ref{directed-random}), we observe
  that \NCWSM and Idempotent LIFO present a slightly better performance
  than the other algorithm. Here, the gains of \NCWSM was of 1.26\% and
  5.97\% for Idempotent FIFO and \BNCWSM.  For the undirected case,
  Idempotent FIFO and LIFO perform sightly better than the other
  algorithms, as shown in figure~\ref{undirected-random}. Using the
  results for the Core i7 processor, we observe the gain for Idempotent
  FIFO and LIFO respect to maximum speed-up of \NCWSM is of 0.27\% and
  2.46\%, respectively, while the gain of \NCWSM respect to Chase-Lev is
  of 5.74\%.  In both cases, \BNCWSM presented the lowest speed-up.  All
  algorithms showed a similar performance for both types of graph in the
  Intel Xeon processor, as shown in Figures~\ref{cluster-directed-random}
  and~\ref{cluster-undirected-random}.

\begin{table*}[ht]
  \centering
  \begin{tabular}{|c|c|c|c|c|c|c|c|}
    \hline
\textbf{Graph} & \textbf{Graph type} & \textbf{Processor} & \textbf{Chase-Lev} & \textbf{FIFO} & \textbf{LIFO} & \textbf{WS-WMULT} & \textbf{B-WS-WMULT} \\
    \hline
\multirow{4}{*}{ \textit{Random} } & \multirow{2}{*}{ Directed } & Core i7 & \(1.0x \sim 3.22x\) & \(0.78x \sim 3.14x\) & \(0.91x \sim 3.33x\) & \(0.78x \sim 3.18x\) & \(0.74x \sim 2.99x\) \\
\cline{3-8}
& & Xeon & \(1.0x \sim 4.49x\) & \(0.81x \sim 4.82x\) & \(0.88x \sim 4.84x\) & \(0.8x \sim 4.95x\) & \(0.79x \sim 4.87x\) \\
\cline{2-8}
& \multirow{2}{*}{ Undirected } & Core i7 & \(1.0x \sim 3.45x\) & \(0.95x \sim 3.67x\) & \(0.98x \sim 3.75x\) & \(0.9x \sim 3.66x\) & \(0.85x \sim 3.45x\) \\
\cline{3-8}
& & Xeon & \(1.0x \sim 9.12x\) & \(0.97x \sim 9.39x\) & \(1.11x \sim 9.36x\) & \(0.92x \sim 9.64x\) & \(0.94x \sim 9.57x\) \\
\hline
\multirow{4}{*}{ \textit{2D Torus} } & \multirow{2}{*}{ Directed } & Core i7 & \(1.0x \sim 2.11x\) & \(0.94x \sim 3.49x\) & \(0.9x \sim 2.73x\) & \(0.92x \sim 3.52x\) & \(0.85x \sim 3.34x\) \\
\cline{3-8}
& & Xeon & \(0.74x \sim 2.63x\) & \(1.21x \sim 5.19x\) & \(0.9x \sim 5.42x\) & \(1.32x \sim 5.33x\) & \(1.26x \sim 5.23x\) \\
\cline{2-8}
& \multirow{2}{*}{ Undirected } & Core i7 & \(1.0x \sim 1.94x\) & \(0.61x \sim 2.14x\) & \(0.84x \sim 1.96x\) & \(0.62x \sim 2.2x\) & \(0.56x \sim 2.06x\) \\
\cline{3-8}
& & Xeon & \(1.0x \sim 1.98x\) & \(0.43x \sim 2.2x\) & \(0.77x \sim 2.33x\) & \(0.43x \sim 2.25x\) & \(0.35x \sim 2.14x\) \\
\hline
\multirow{4}{*}{ \textit{2D60 Torus} } & \multirow{2}{*}{ Directed } & Core i7 & \(1.0x \sim 2.06x\) & \(0.59x \sim 2.21x\) & \(0.87x \sim 2.15x\) & \(0.6x \sim 2.32x\) & \(0.56x \sim 2.13x\) \\
\cline{3-8}
& & Xeon & \(1.0x \sim 1.95x\) & \(0.66x \sim 2.38x\) & \(0.86x \sim 2.09x\) & \(0.64x \sim 2.39x\) & \(0.6x \sim 2.43x\) \\
\cline{2-8}
& \multirow{2}{*}{ Undirected } & Core i7 & \(1.0x \sim 2.19x\) & \(0.77x \sim 2.71x\) & \(0.88x \sim 2.55x\) & \(0.78x \sim 2.81x\) & \(0.73x \sim 2.51x\) \\
\cline{3-8}
& & Xeon & \(1.0x \sim 2.57x\) & \(0.56x \sim 3.66x\) & \(0.8x \sim 3.29x\) & \(0.57x \sim 3.79x\) & \(0.56x \sim 3.61x\) \\
\hline
\multirow{4}{*}{ \textit{3D Torus} } & \multirow{2}{*}{ Directed } & Core i7 & \(1.0x \sim 2.21x\) & \(0.92x \sim 3.17x\) & \(0.88x \sim 2.79x\) & \(0.91x \sim 3.47x\) & \(0.86x \sim 3.26x\) \\
\cline{3-8}
& & Xeon & \(0.7x \sim 3.52x\) & \(0.92x \sim 5.75x\) & \(1.02x \sim 4.64x\) & \(1.01x \sim 5.75x\) & \(0.94x \sim 5.65x\) \\
\cline{2-8}
& \multirow{2}{*}{ Undirected } & Core i7 & \(1.0x \sim 2.5x\) & \(0.73x \sim 2.51x\) & \(0.9x \sim 2.56x\) & \(0.67x \sim 2.58x\) & \(0.62x \sim 2.52x\) \\
\cline{3-8}
& & Xeon & \(1.0x \sim 3.77x\) & \(0.83x \sim 4.94x\) & \(0.94x \sim 3.94x\) & \(0.85x \sim 4.87x\) & \(0.76x \sim 5.05x\) \\
\hline
\multirow{4}{*}{ \textit{3D40 Torus} } & \multirow{2}{*}{ Directed } & Core i7 & \(1.0x \sim 2.95x\) & \(0.88x \sim 3.19x\) & \(0.91x \sim 3.06x\) & \(0.86x \sim 3.26x\) & \(0.82x \sim 2.99x\) \\
\cline{3-8}
& & Xeon & \(1.0x \sim 3.63x\) & \(0.8x \sim 4.46x\) & \(0.88x \sim 3.99x\) & \(0.8x \sim 4.47x\) & \(0.8x \sim 4.43x\) \\
\cline{2-8}
& \multirow{2}{*}{ Undirected } & Core i7 & \(1.0x \sim 2.65x\) & \(0.8x \sim 3.05x\) & \(0.89x \sim 2.84x\) & \(0.78x \sim 3.02x\) & \(0.72x \sim 2.78x\) \\
\cline{3-8}
& & Xeon & \(1.0x \sim 3.67x\) & \(0.85x \sim 4.78x\) & \(0.96x \sim 3.95x\) & \(0.79x \sim 4.98x\) & \(0.85x \sim 4.81x\) \\
\hline
  \end{tabular}

  \caption{\label{table:range-table} Summary of speedups performed by
    each algorithm. We show the maximum and the minimum obtained in
    each experiment.}
\end{table*}

\paragraph{2D Torus}
  In the case of 2D Torus (and similarly for the 2D60 Torus), we observe a
  pattern: our algorithms showed a bad performance with one or two threads,
  but when the number of threads increases, our algorithms
  improved the speed up of Idempotent LIFO and Chase-Lev. Additionally,
  the performance showed by Idempotent FIFO is between the registered by
  \NCWSM\@ and \BNCWSM. We can see this behavior in
  Figures~\ref{directed-torus2d},~\ref{undirected-torus2d},
  \ref{directed-torus2d60} and~\ref{undirected-torus2d60}.  In the
  results of the experiments in the Core i7 processor, we observe the
  following: for the 2D Directed Torus, the gains of \NCWSM respect to
  the maximum average were 40\%, 0.85\% and 22.4\% for Chase-Lev,
  Idempotent FIFO and LIFO respectively. Making a similar comparison,
  for the 2D Undirected Torus, we observed gains of 11.82\%, 2.73\% and
  10.91\%. For the 2D60 Directed Torus, the gains of \NCWSM were
  11.21\%, 4.74\% and 7.33\%. For the 2D60 Undirected
  Torus, the gains with \NCWSM were 22.06\%, 3.56\% and 9.25\%.

  A similar behavior can be observed in the experiments performed in
  the Intel Xeon cluster. In most of the of executions, \NCWSM\@ has a
  better performance than other algorithms, see the
  Figures~\ref{cluster-directed-torus2d},~\ref{cluster-undirected-torus2d},
  ~\ref{cluster-directed-torus2d60}
  and~\ref{cluster-undirected-torus2d60}. We can also observe a
  noticeable loss of performance of Chase-Lev
  over the directed 2D Torus, sometimes a with
  performance that is worse than if it were ran using a single thread
  (Figure~\ref{cluster-directed-torus2d}). It similarly happens for the
  directed and undirected 2D60 Torus.

  \begin{figure}[ht]
    \begin{center}
      \subfloat[Subfigure 7][Speed up for directed torus 2D graph on
      Intel Core i7.] {
        \includegraphics[scale=0.3]{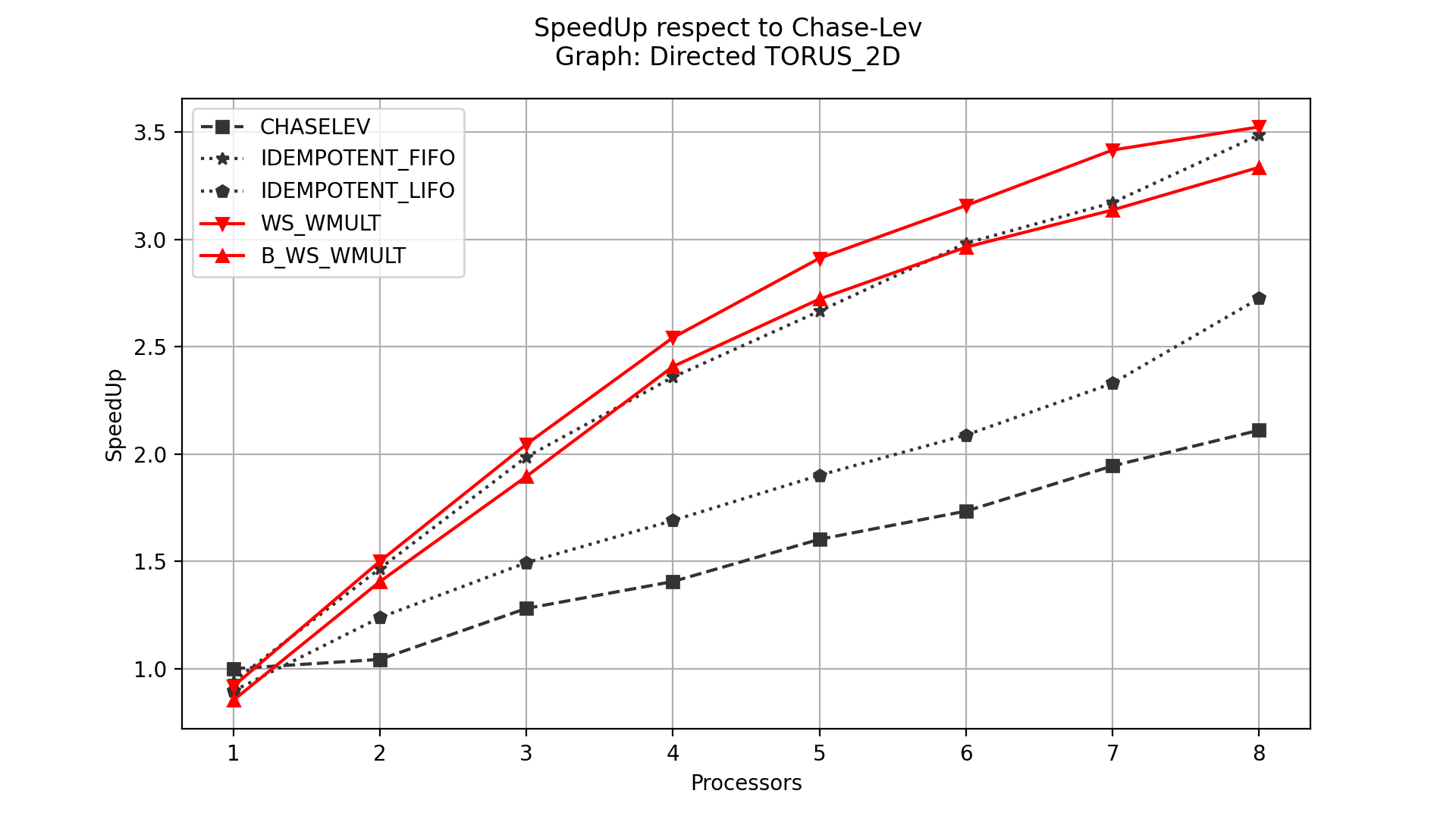}
        \label{directed-torus2d}
      }
      \subfloat[Subfigure 8][Speed up for undirected torus 2D graph on
      Intel Core i7.]{
        \includegraphics[scale=0.3]{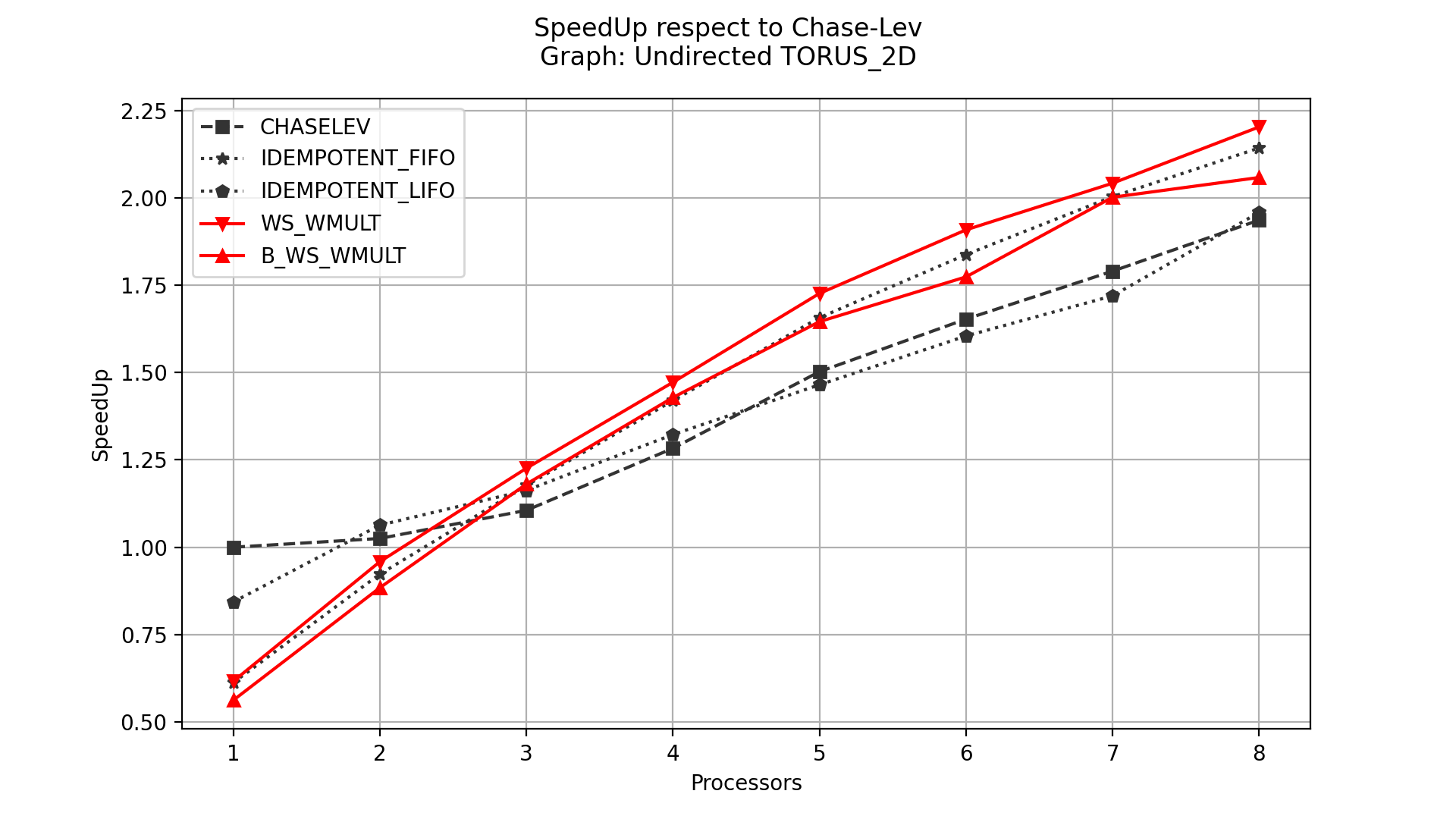}
        \label{undirected-torus2d}
      }
      \qquad
      \subfloat[Subfigure 9][Speed up for directed 2D60 torus graph on
      Intel Core i7.] {
        \includegraphics[scale=0.3]{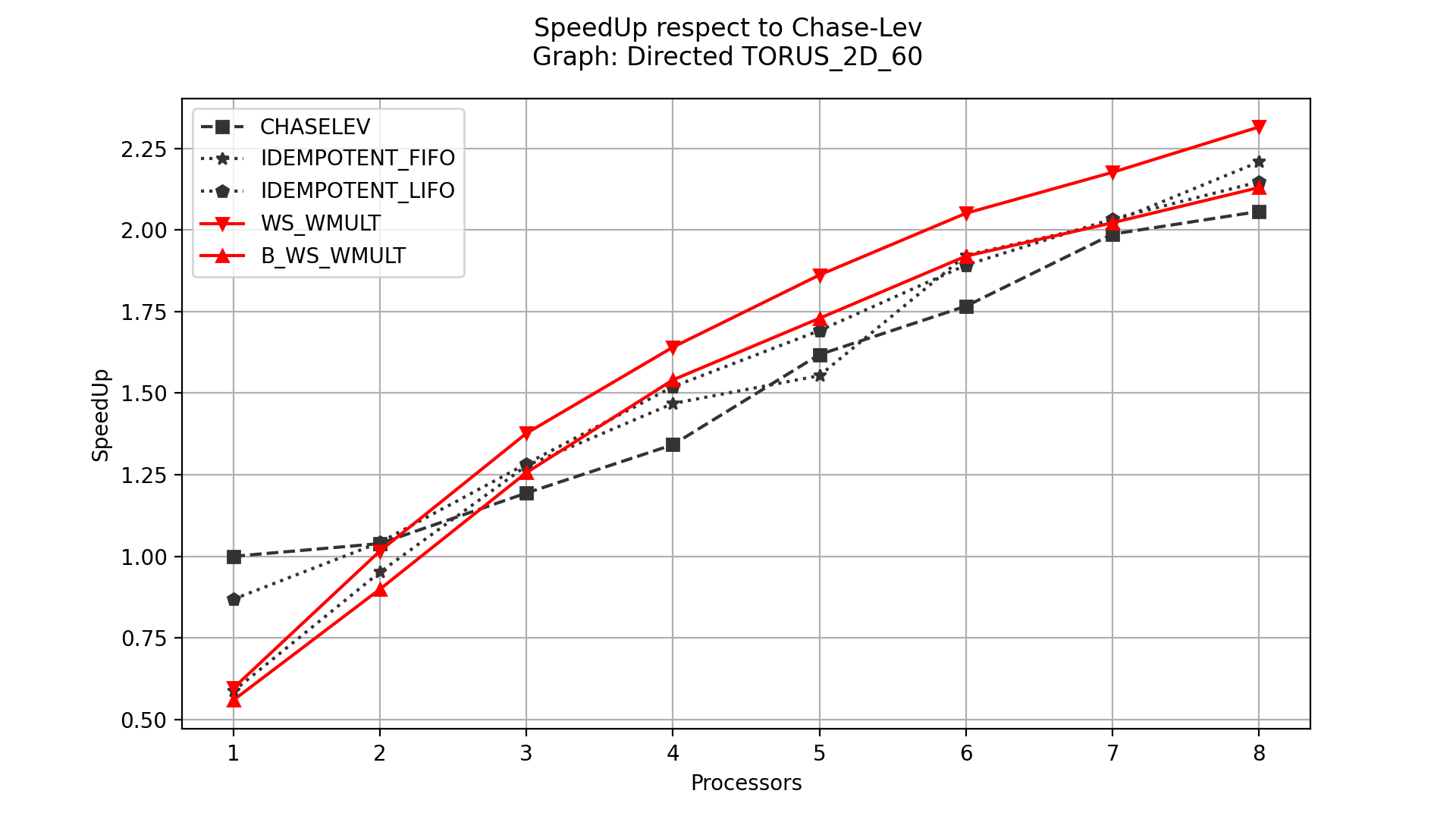}
        \label{directed-torus2d60}
      }
      \subfloat[Subfigure 10][Speed up for undirected 2D60 torus graph
      on Intel Core i7.] {
        \includegraphics[scale=0.3]{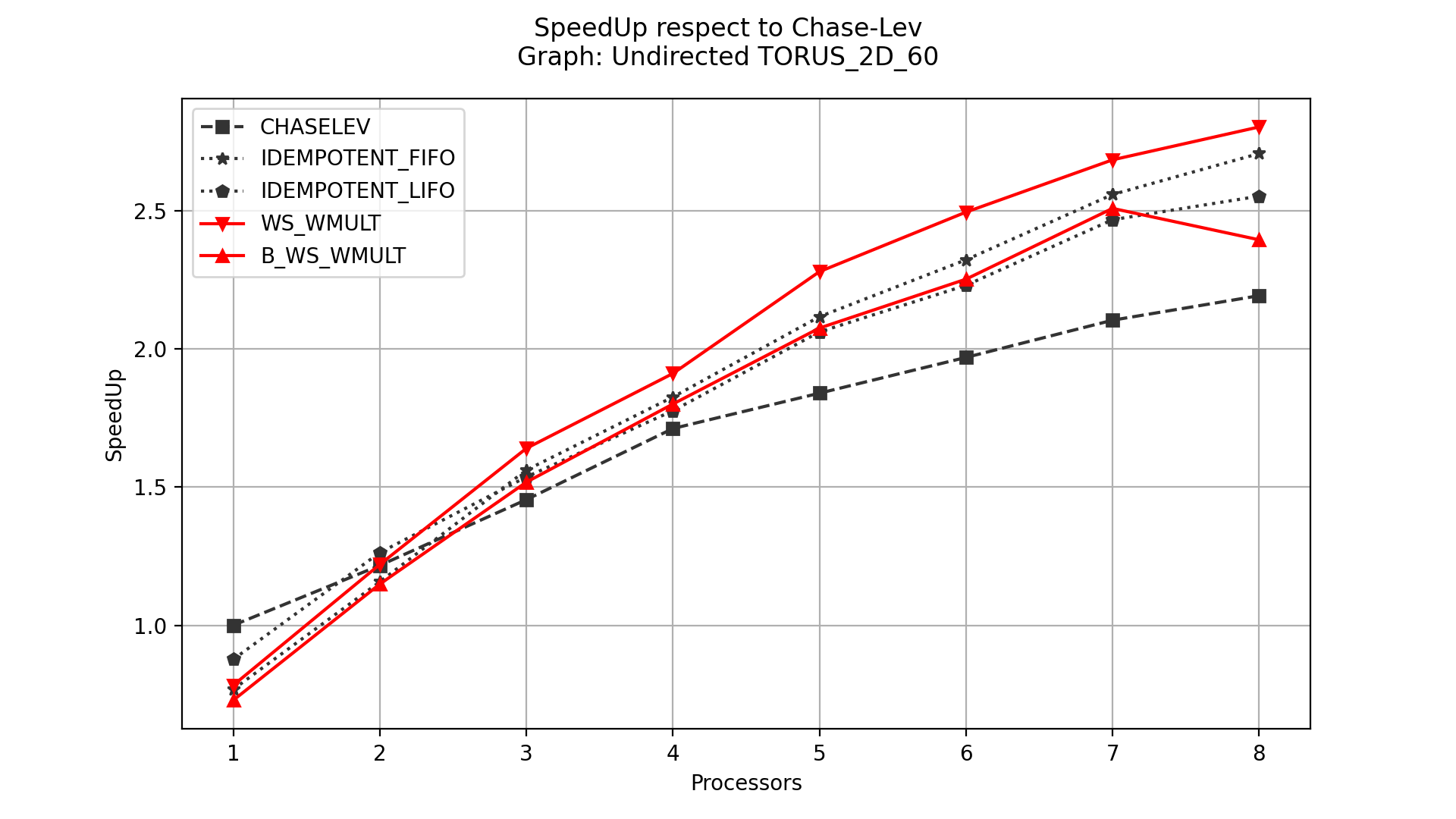}
        \label{undirected-torus2d60}
      }
    \end{center}
    \caption{Speedups for 2D Torus on Intel Core i7}
  \end{figure}

  \begin{figure}[ht]
    \centering
      \subfloat[Subfigure 11][Speed up for directed torus 2D graph on
      Intel Xeon cluster.] {
        \includegraphics[scale=0.3]{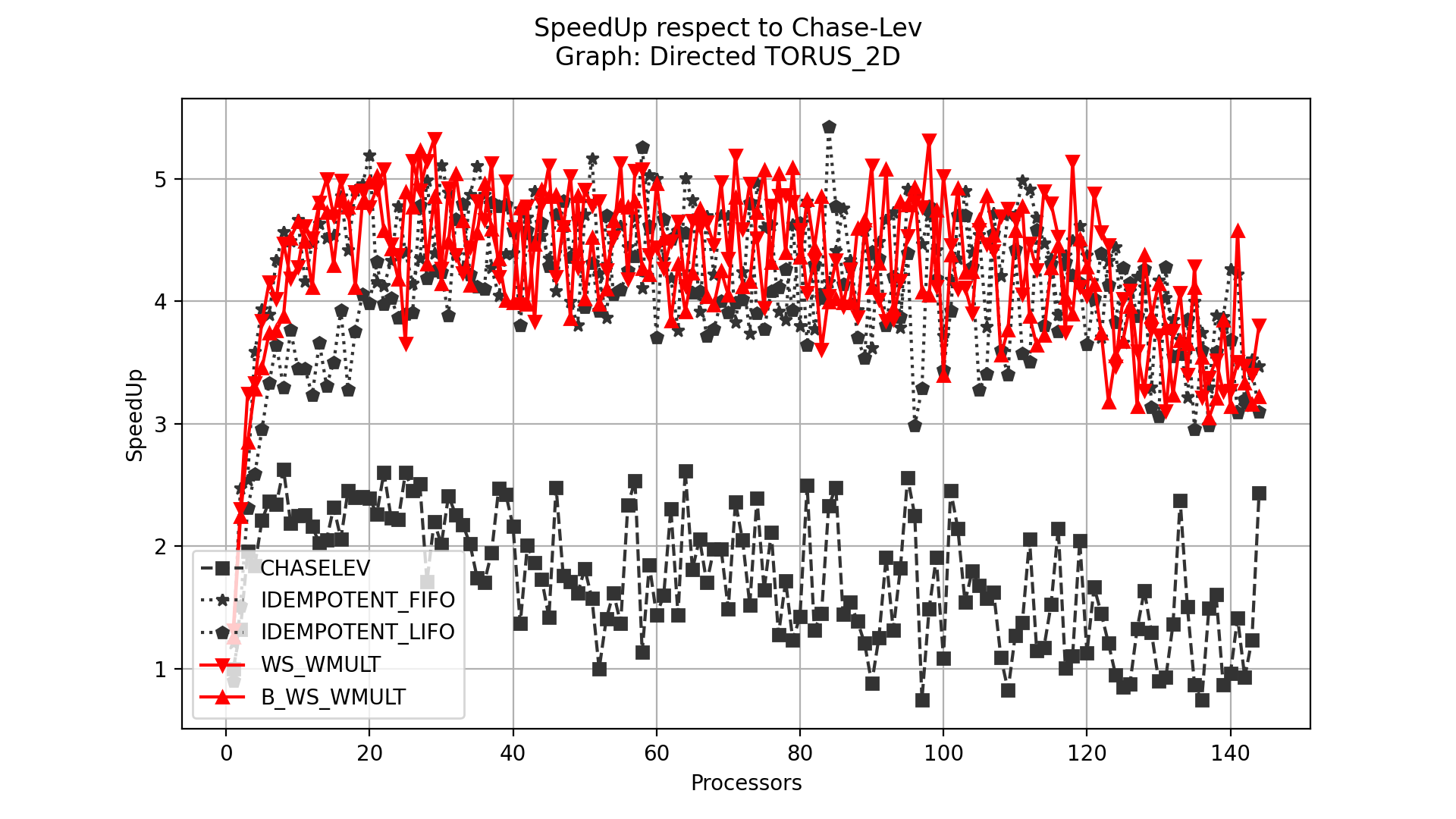}
        \label{cluster-directed-torus2d}
      }
      \subfloat[Subfigure 12][Speed up for undirected torus 2D graph
      on Intel Xeon cluster.] {
        \includegraphics[scale=0.33]{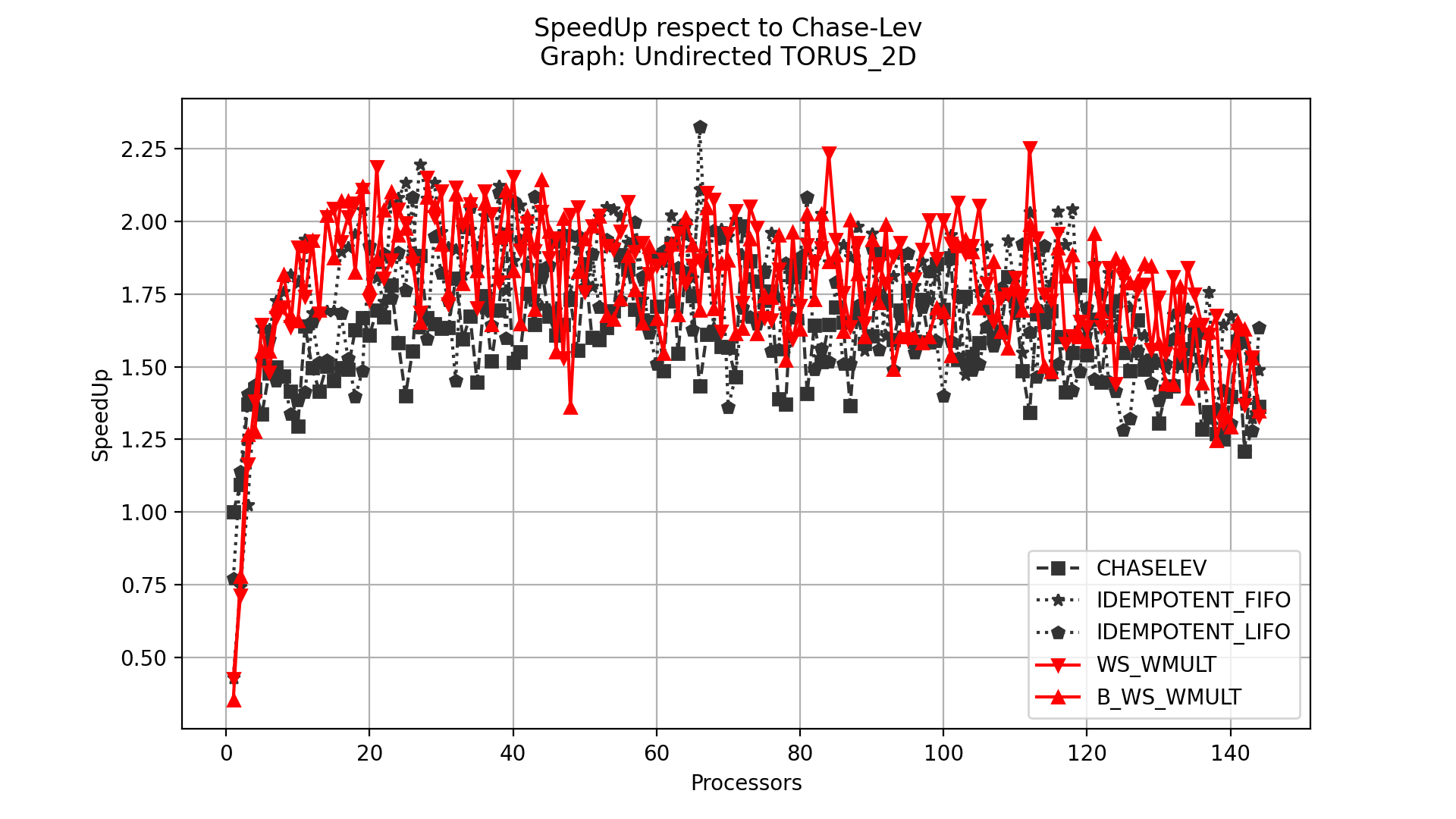}
        \label{cluster-undirected-torus2d}
      }
      \qquad
      \subfloat[Subfigure 13][Speed up for directed 2D60 torus graph
      on Intel Xeon cluster.] {
        \includegraphics[scale=0.3]{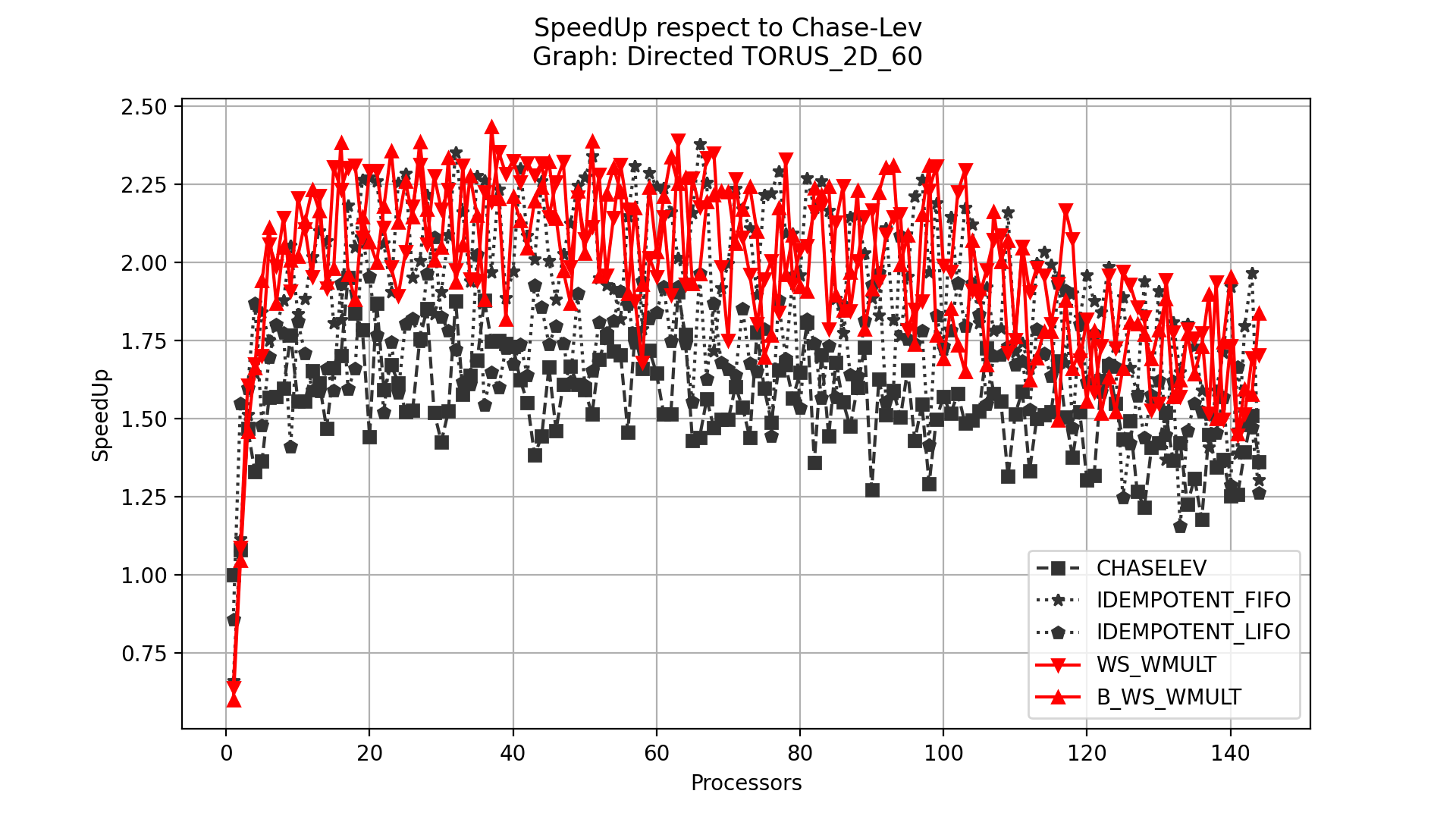}
        \label{cluster-directed-torus2d60}
      }
      \subfloat[Subfigure 14][Speed up for undirected 2D60 torus graph
      on Intel Xeon cluster.]{
        \includegraphics[scale=0.33]{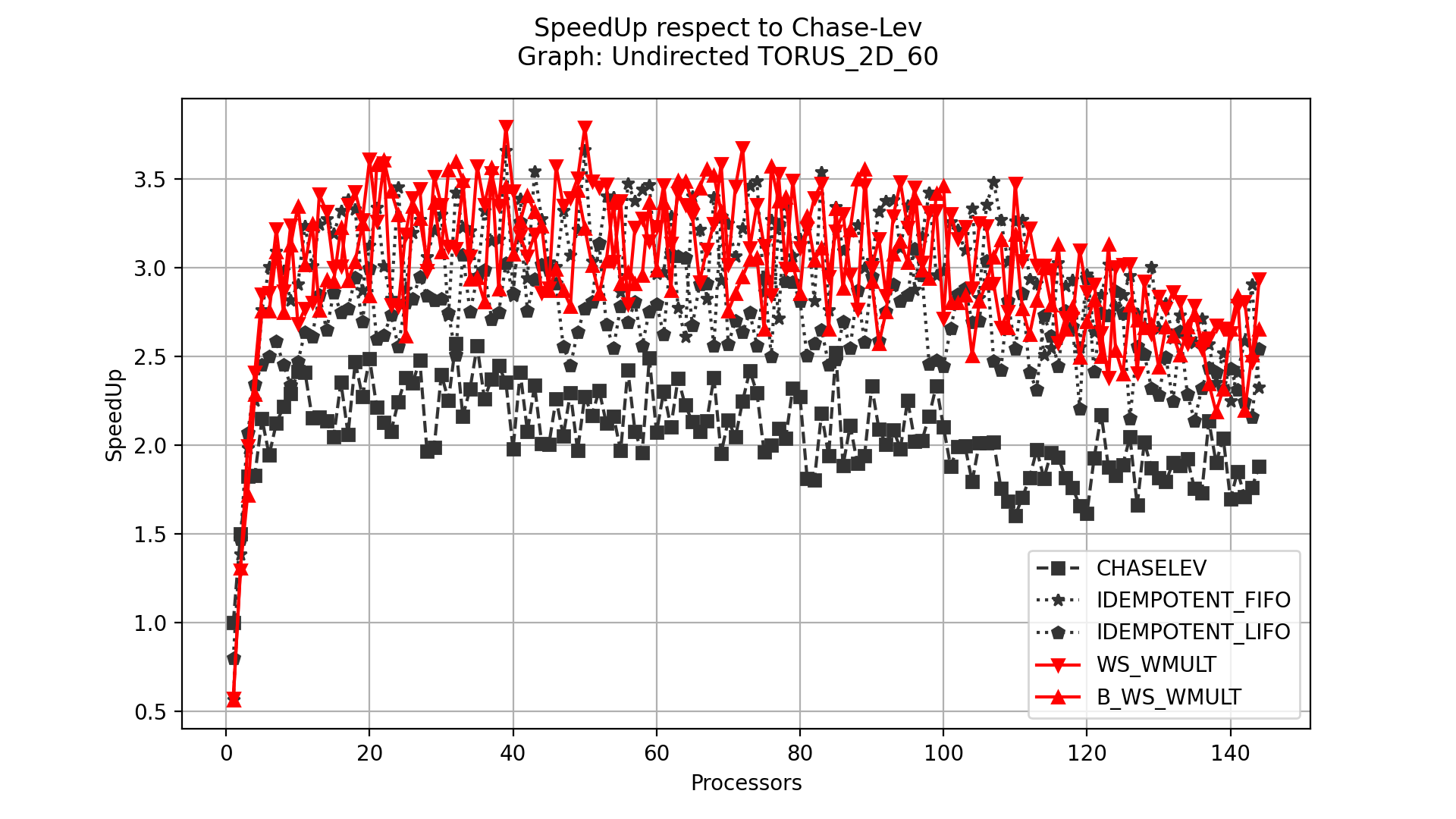}
        \label{cluster-undirected-torus2d60}
      }
    \caption{Speedups for 2D Torus on Intel Xeon}
  \end{figure}

\paragraph{3D Torus}
  For the directed 3D Torus and the 3D40 Torus (directed and
  undirected), we noted a behavior similar to that of the 2D Torus
  (Figures~\ref{directed-torus3d},~\ref{directed-torus3d40},~\ref{undirected-torus3d40}).
  In these experiments, we observe the following gains for the maximum
  speed-up performed by \NCWSM in the directed 3D Torus: 36.61\%, 8.65\%
  and 19.6\% respect to Chase-Lev, Idempotent FIFO and Idempotent LIFO
  respectively.
  For the directed 3D40 Torus, the gains were 9.51\%, 2.15\% and 6.13\%,
  and for the undirected version, 12.25\%, -0.99\%\footnote{In this case, Idempotent
  FIFO has a sightly better maximum speed-up than \NCWSM} and 5.95\%.
  For the undirected 3D Torus, the Idempotent FIFO presents a slightly
  better performance than \NCWSM\@ algorithm, while
  \BNCWSM\@ has a sightly worse performance than any other
  algorithms (Figure~\ref{undirected-torus3d}). Here, the gains of
  \NCWSM for the maximum speed-ups, were of 3.1\%, 2.71\% and 0.78\%
  respect to Chase-Lev, Idempotent FIFO and Idempotent LIFO.

  In the experiments in the Intel Xeon, we observe that our
  algorithms, together with Idempotent FIFO, exhibit a better performance for
  the four versions of 3D torus
  (Figures~\ref{cluster-directed-torus3d},~\ref{cluster-undirected-torus3d},
  ~\ref{cluster-directed-torus3d40}
  and~\ref{cluster-undirected-torus3d40}).

  \begin{figure}[ht]
    \begin{center}
      \subfloat[Subfigure 15][Speed up for directed torus 3D graph on Intel Core
      i7.] {
        \includegraphics[scale=0.3]{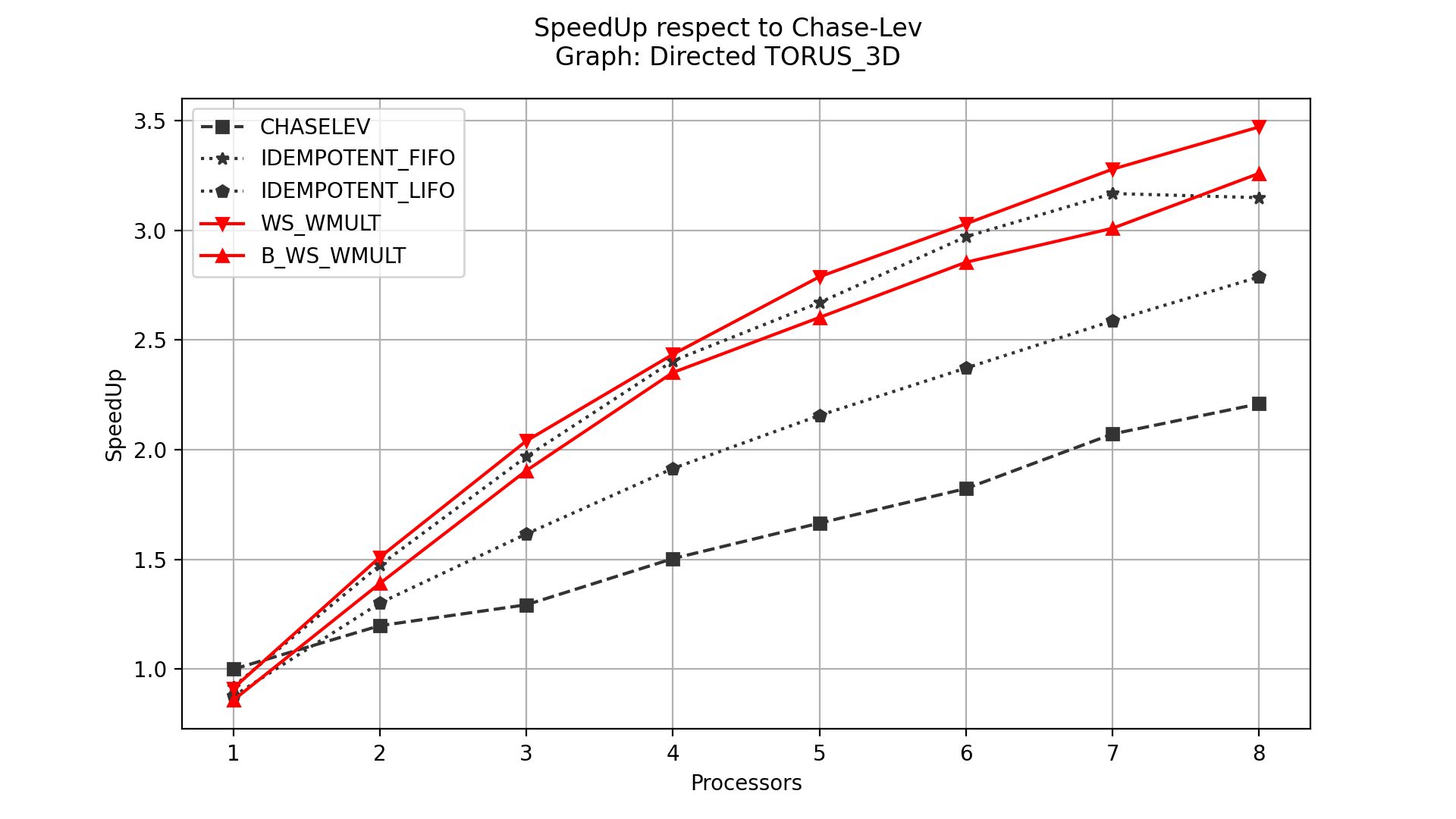}
        \label{directed-torus3d}
      }
      \subfloat[Subfigure 16][Speed up for undirected torus 3D graph on Intel Core
      i7.] {
        \includegraphics[scale=0.3]{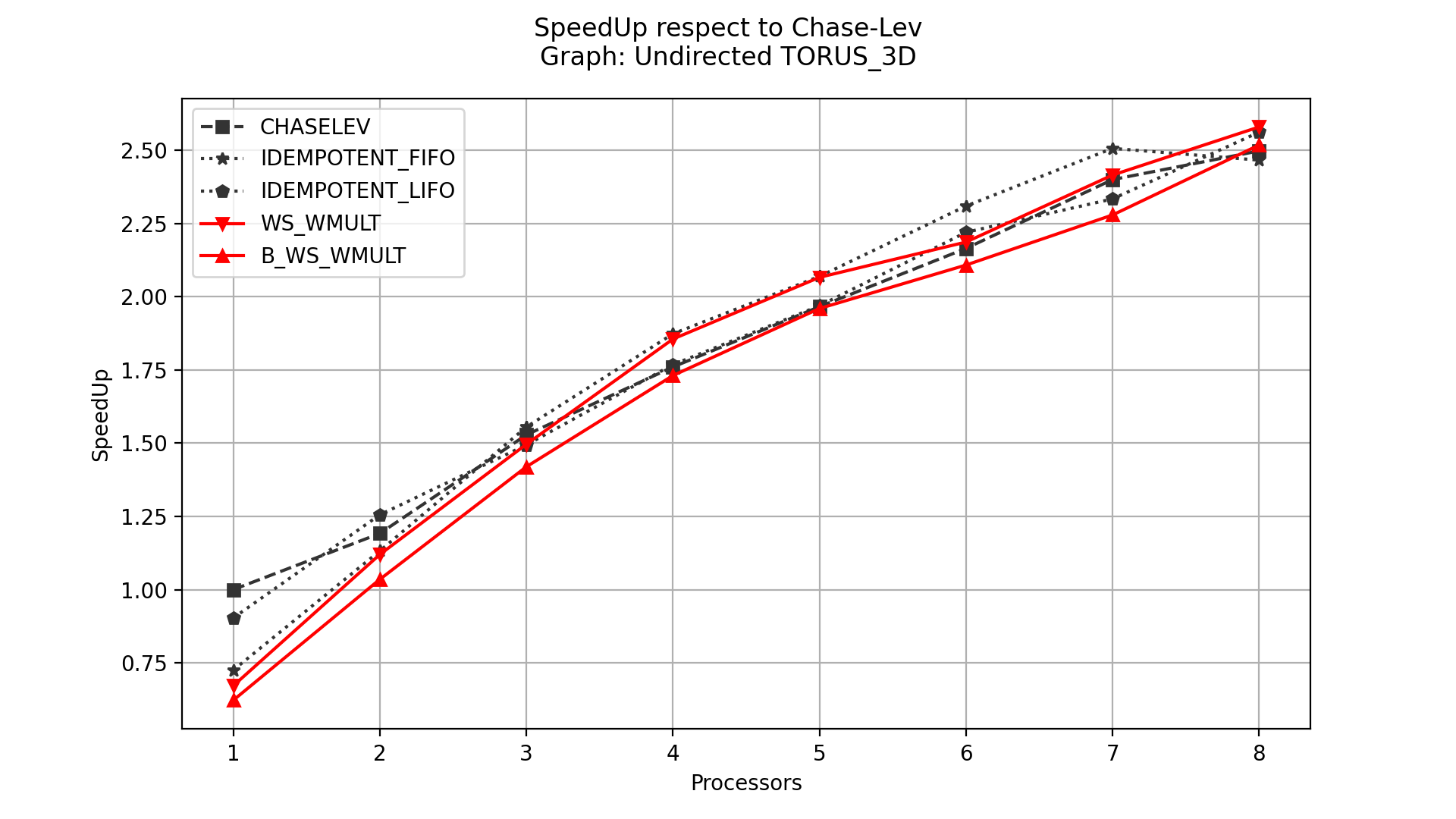}
        \label{undirected-torus3d}
      }
      \qquad
      \subfloat[Subfigure 17][Speed up for directed torus 3D40 graph on Intel Core
      i7.] {
        \includegraphics[scale=0.3]{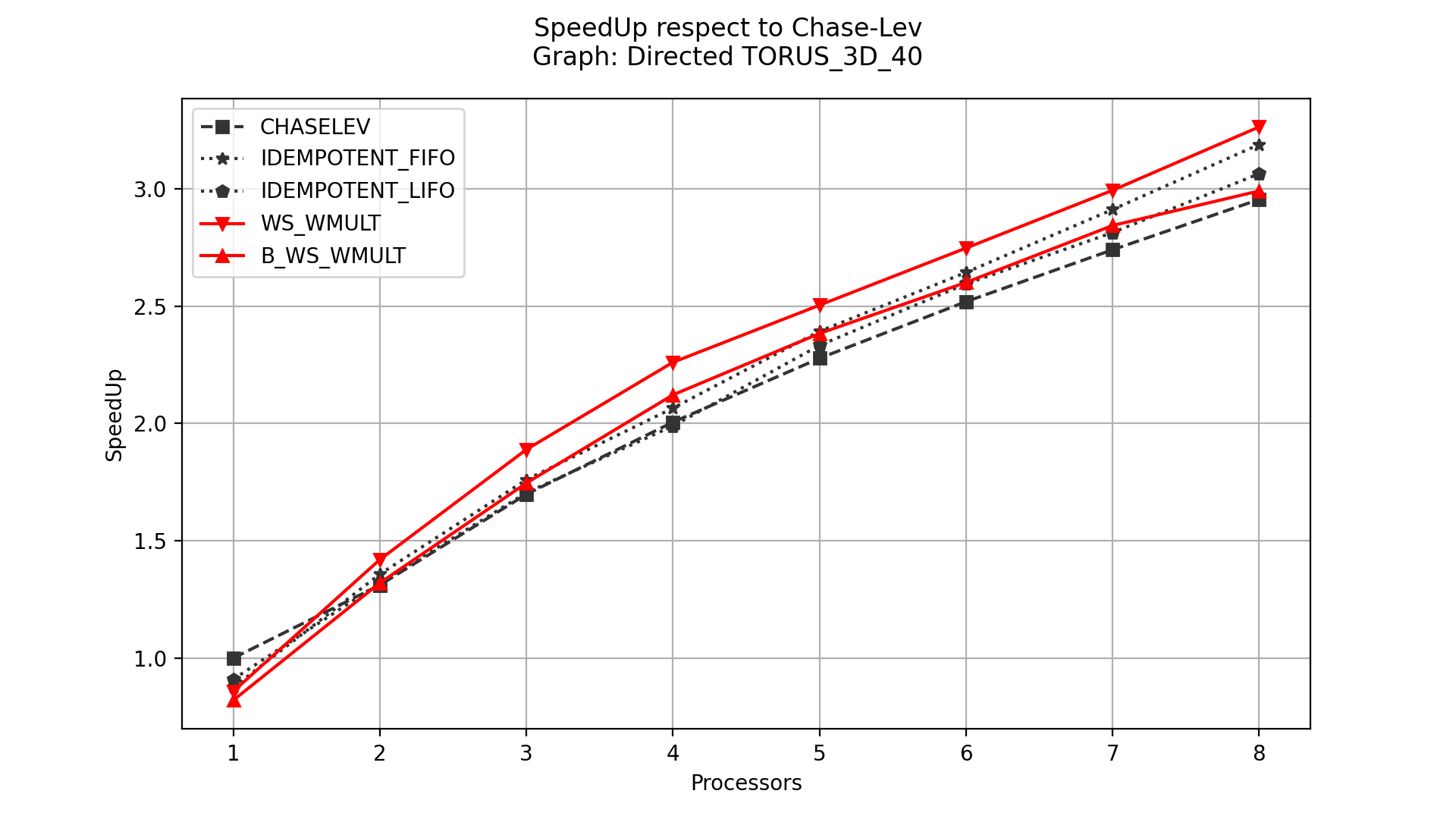}
        \label{directed-torus3d40}
      }
      \subfloat[Subfigure 18][Speed up for undirected torus 3D40 graph on Intel Core
      i7.] {
        \includegraphics[scale=0.3]{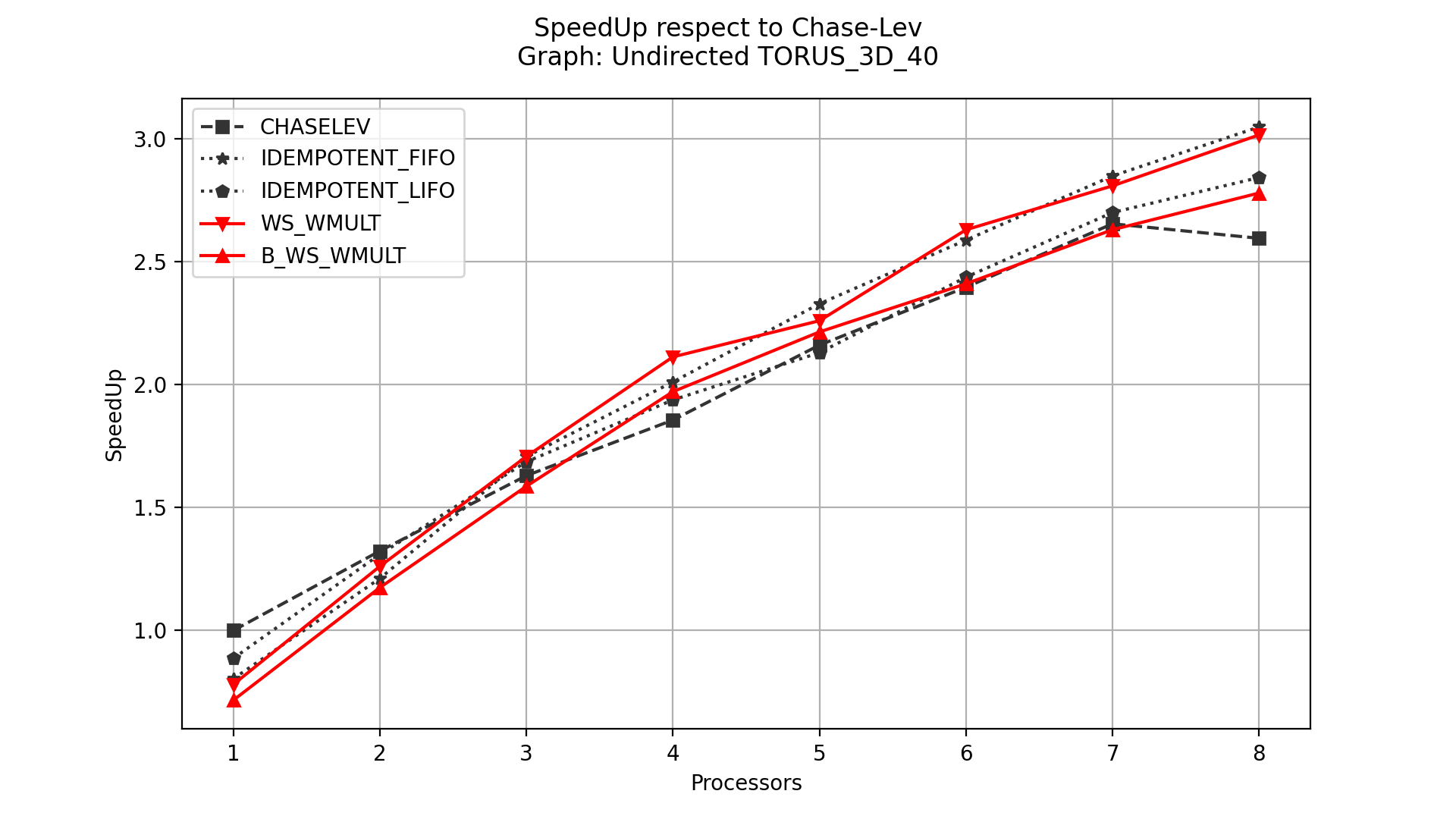}
        \label{undirected-torus3d40}
      }
    \end{center}
        \caption{Speedups for 3D Torus on Intel Core i7}
  \end{figure}

\begin{figure}[ht]
  \begin{center}
    \subfloat[Subfigure 19][Speed up for directed torus 3D graph on Intel
    Xeon.]{
    \includegraphics[scale=0.33]{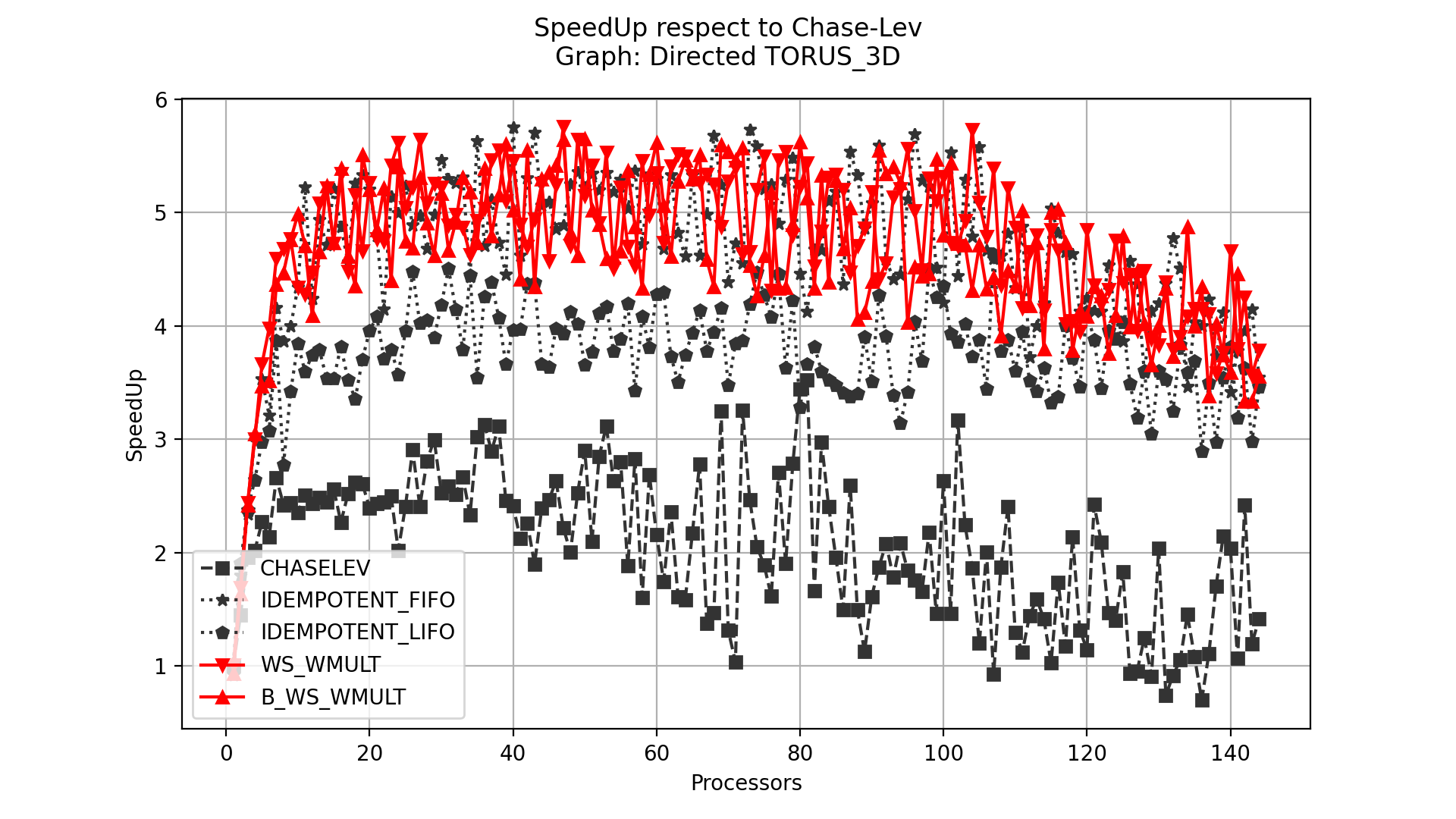}
    \label{cluster-directed-torus3d}
  }
  \subfloat[Subfigure 20][Speed up for undirected torus 3D graph on Intel
  Xeon.] {
    \includegraphics[scale=0.33]{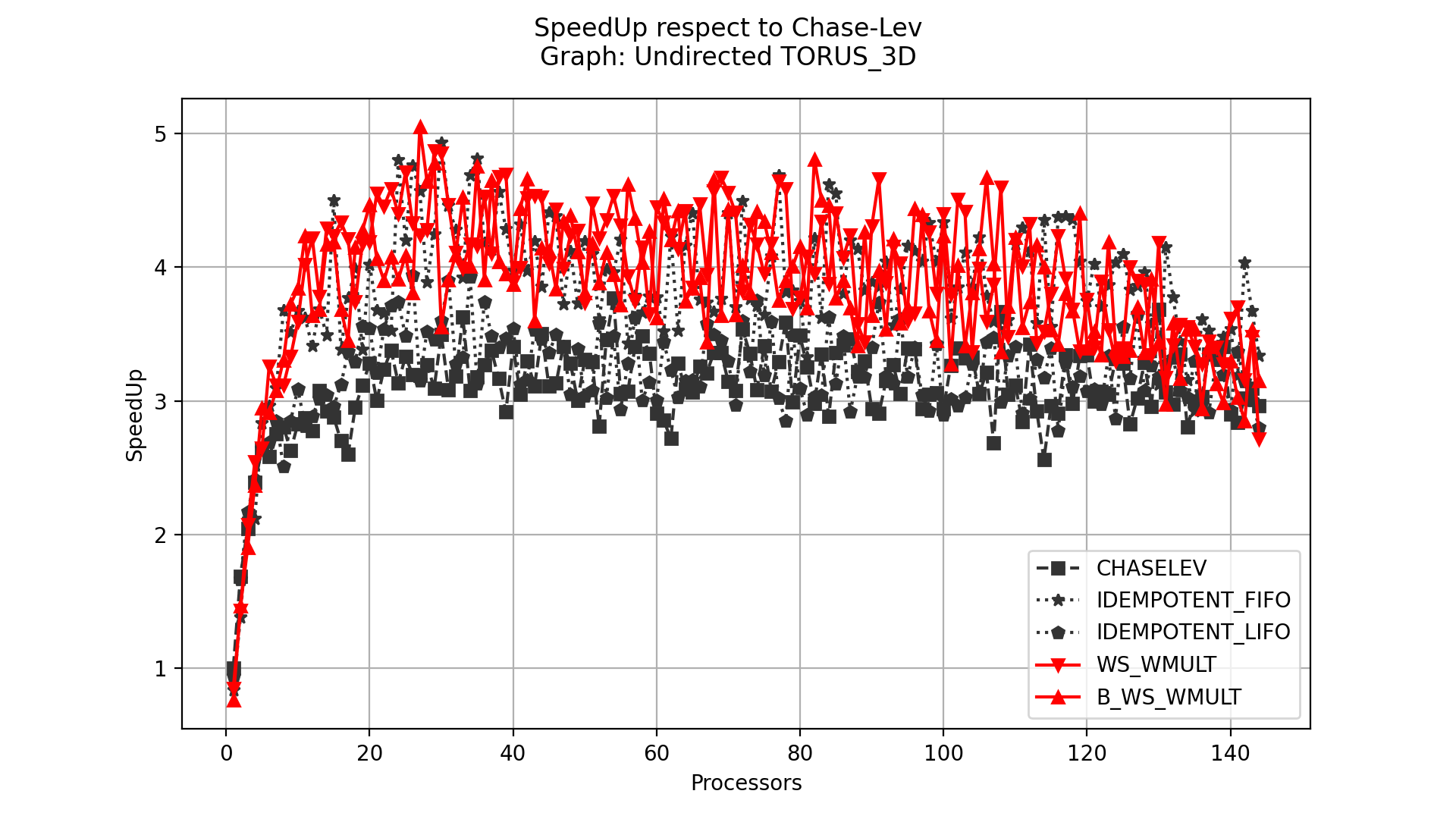}
    \label{cluster-undirected-torus3d}
  }
  \qquad
  \subfloat[Subfigure 21][Speed up for directed torus 3D40 graph on Intel
  Xeon.] {
    \includegraphics[scale=0.33]{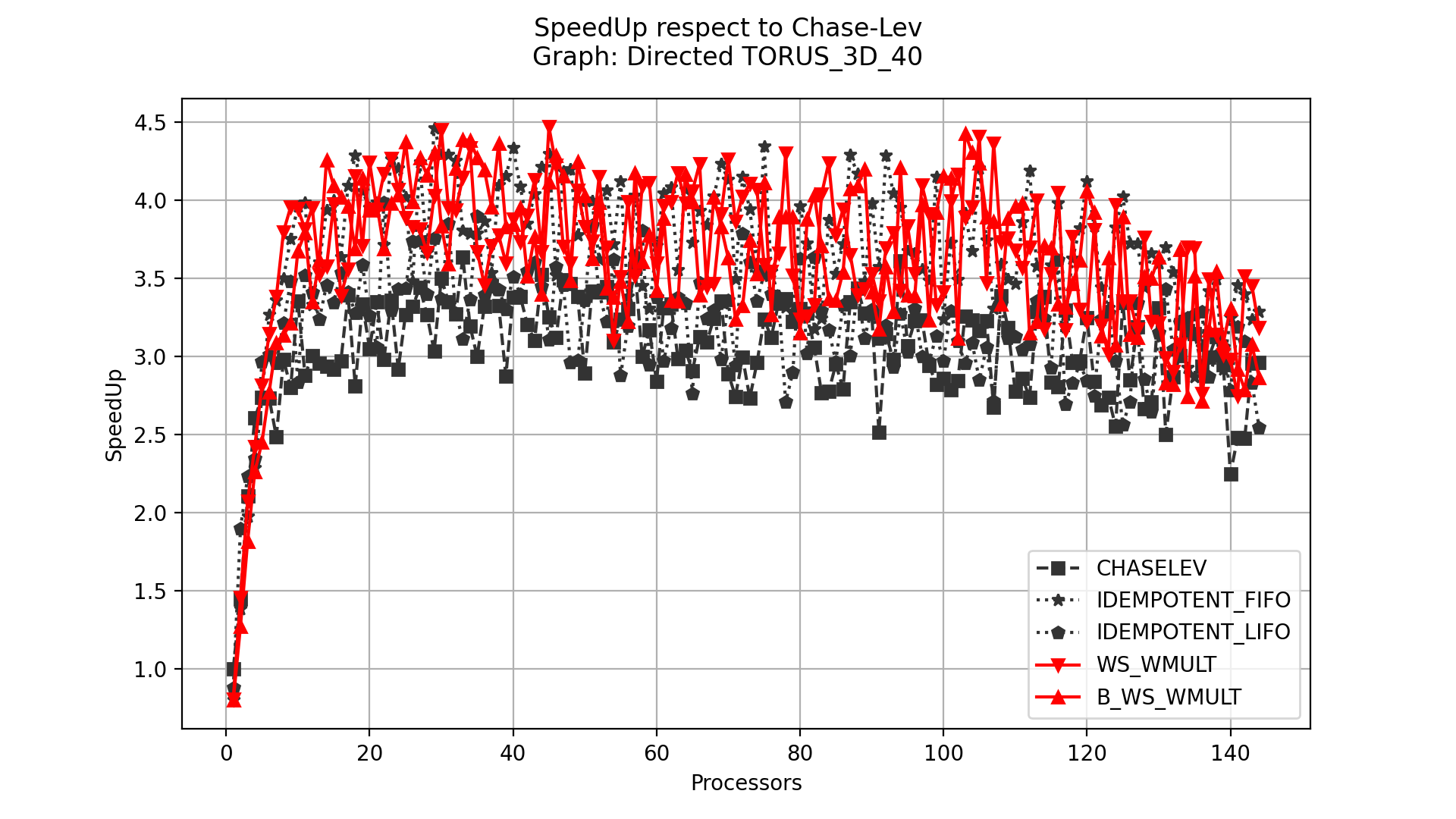}
    \label{cluster-directed-torus3d40}
  }
  \subfloat[Subfigure 22][Speed up for undirected torus 3D40 graph on Intel
  Xeon.] {
    \includegraphics[scale=0.33]{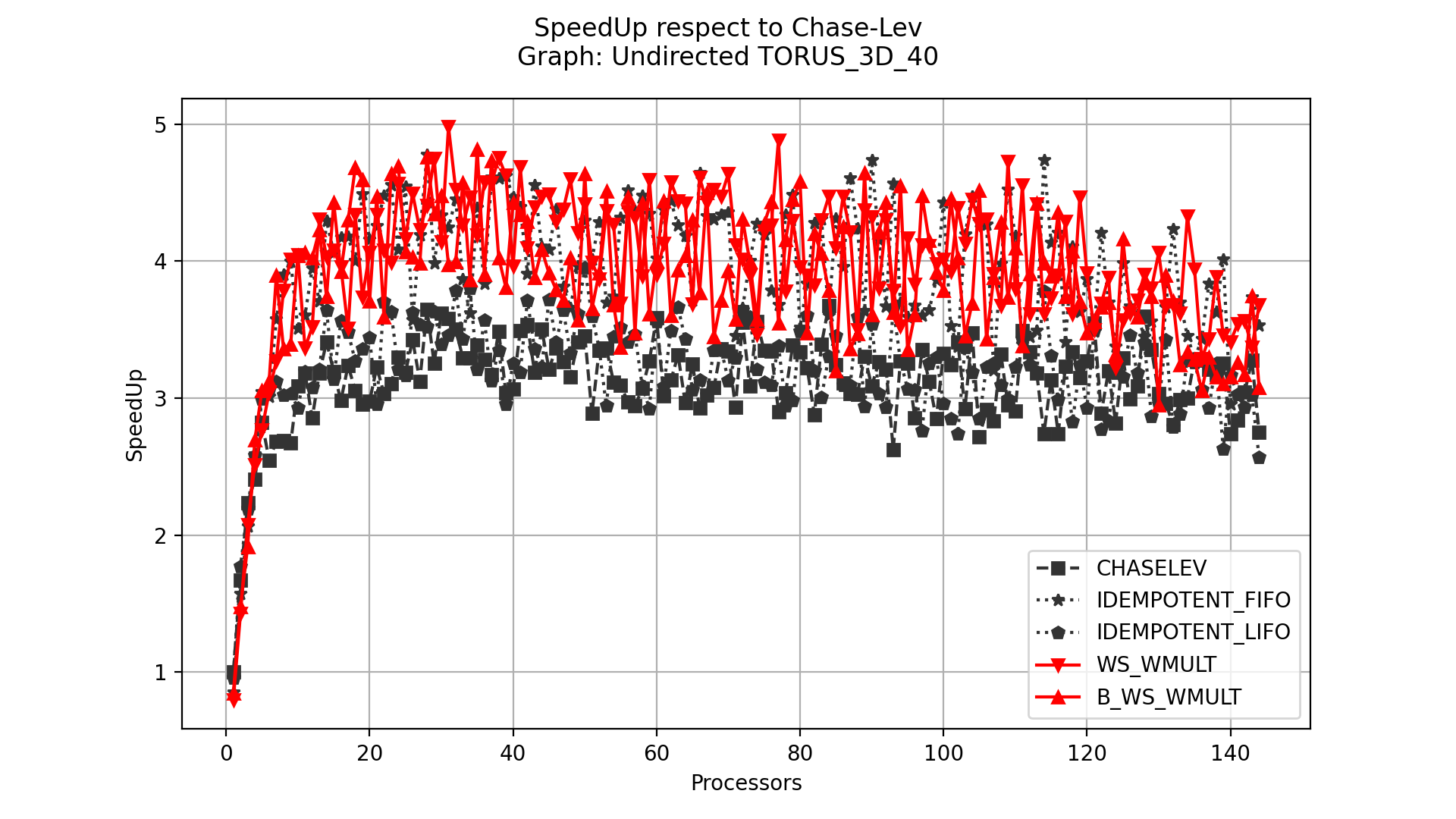}
    \label{cluster-undirected-torus3d40}
  }
\end{center}
    \caption{Speedups for 3D Torus on Intel Xeon}
\end{figure}

\section{Final Discussion}
\label{sec-final-discussion}

We have studied two relaxations for work-stealing, called multiplicity
and weak multiplicity. Both of them allow a task to be extracted by
more than one \Take/\Steal operation but each process can take the
same task at most once; however, the relaxation can arise only in
presence of concurrency. We presented fully \R/\W wait-free algorithms
for the relaxations. All algorithms are devoid of \RAW synchronization
patterns and the algorithm for the weak multiplicity case is also
fully fence-free with constant step complexity.  To the best of our
knowledge, this is the first time work-stealing algorithms with these
properties have been proposed, evading the impossibility result
in~\cite{AGHKMV11} in all operations.  From the theoretical
perspective of the consensus number hierarchy, we have thus shown that
work-stealing with multiplicity and weak multiplicity lays at the
lowest level of the hierarchy~\cite{H91}.

We have performed some an experimental evaluation comparing our
work-stealing solutions to THE Cilk, Chase-Lev and idempotent
work-stealing algorithms.  The experiments show that our \NCWSM
algorithm exhibits a better performance than the previously mentioned
algorithms, while its bounded version, \BNCWSM, with a \SWAP-based
\Steal operation, shows a
lower performance than \NCWSM, but keeps a competitive
performance with respect to the other algorithms.

All our results together show that one of the simplest synchronization
mechanisms suffices to solve non-trivial coordination problems,
particularly, computing a spanning tree.

For future research, we are interested in developing algorithms for
work-stealing with multiplicity and weak multiplicity that
insert/extract tasks in orders different from FIFO.
Also, it is interesting to explore if the techniques in our algorithms
can be applied to efficiently solve relaxed versions of other concurrent objects.
For example, it is worth to explore if a multi-enqueuer multi-dequeuer queue
with multiplicity (resp. weak multiplicity) can be obtained by manipulating
the tail with a \MaxReg (resp. \RangeMaxReg) object, like the head
is manipulated in our algorithms.

\bibliographystyle{plainurl}
\bibliography{references}

\end{document}